\newcommand{\prref}[1]{\prettyref{#1}}
\newcommand{\LTL}{\ensuremath{\mathrm{LTL}}}
\newcommand{\LTLS}{\ensuremath{\LTL_\Sigma}}
\newcommand{\PSPACE}{\ensuremath{\mathrm{PSPACE}}\xspace}
\newcommand{\True}{\top}
\newcommand{\XU}{\mathbin{\mathsf{XU}}}
\newcommand{\compl}{\mathfrak {c0}}
\newcommand{\set}[2]{\{#1 \mid #2\}}
\newcommand{\os}[1]{\{\mathinner{#1}\}}
\newcommand{\abs}[1]{\left|\mathinner{#1}\right|}
\newcommand{\N}{\mathbb{N}}
\newcommand{\Q}{\mathbb{Q}}
\newcommand{\R}{\mathbb{R}}
\newcommand{\cA}{\mathcal{A}}
\newcommand{\cB}{\mathcal{B}}
\newcommand{\cR}{\mathcal{R}}
\newcommand{\cM}{\mathcal{M}}
\newcommand{\cP}{\mathcal{P}}
\newcommand{\cO}{\mathcal{O}}
\newcommand{ \ov}[1]{ \overline{#1}\, }
\newcommand{\IFF}{if and only if\xspace}
\newcommand{\epi}{epimorphism\xspace}
\newcommand{\sse}{\subseteq}
\newcommand{\es}{\emptyset}
\newcommand{\sm}{\setminus}
\renewcommand{\phi}{\varphi}
\newcommand{\oo}{\omega}
\newcommand{\alp}{\alpha}
\newcommand{\bet}{\beta}
\newcommand{\gam}{\gamma}
\newcommand{\del}{\delta}
\newcommand{\sig}{\sigma}
\newcommand{\Sig}{\Sigma}
\newcommand{\Gam}{\Gamma}
\newcommand{\Del}{\Delta}
\newcommand{\Aa}{\mathcal{A}}
\newcommand{\Bb}{\mathcal{B}}
\newcommand{\act}[1]{\stackrel{#1}{\longrightarrow}}
\newcommand{\bean}{\begin{eqnarray*}}
\newcommand{\eean}{\end{eqnarray*}}
\newcommand{\moni}{monitorable\xspace}
\newcommand{\bin}{\mathop{\mathrm{bin}}}
\begin{document}
\title{A Note on Monitors and B\"uchi automata}

\author{Volker Diekert\inst{1} and Anca Muscholl\inst{2} and Igor Walukiewicz\inst{2}}
\institute{Universit\"at Stuttgart, FMI, Germany 
 \and LaBRI, University of Bordeaux, France}

\maketitle
\begin{abstract}
 When a property needs to be checked against an unknown or very
 complex system, classical exploration techniques like model-checking
 are not applicable anymore. Sometimes a~monitor can be
 used, that checks a given property on the underlying system at runtime. A
 monitor for a property $L$ is a deterministic finite automaton $\cM_L$ that after each
 finite execution tells whether (1) every possible extension of the
 execution is in $L$, or (2) every possible extension is in the
 complement of $L$, or neither (1) nor (2) holds. Moreover, $L$ being
 monitorable  means that it is  always possible that in some future the monitor reaches (1) or (2). Classical examples
 for monitorable properties are safety and cosafety properties. On the other hand,  deterministic liveness properties like 
 ``infinitely many $a$'s'' are not monitorable.

 We discuss various monitor constructions with a focus on deterministic $\oo$-regular languages. We locate a proper subclass of 
 of deterministic $\oo$-regular languages but also strictly large than the 
 subclass of languages which are deterministic and codeterministic; and for this subclass there exists a canonical monitor which also accepts the language itself. 

 We also address the problem to decide 
 monitorability in comparison with deciding liveness. 
  The state of the art is as follows. Given a B\"uchi automaton, it is PSPACE-complete to decide liveness or
 monitorability. Given an LTL formula, deciding liveness becomes 
 EXPSPACE-complete, but  the complexity to decide monitorability remains open.
\end{abstract}

\section*{Introduction}\label{sec:intro}
Automata theoretic verification has its mathematical 
foundation in classical papers written in the 1950's and 1960's by B\"uchi, Rabin and others. Over the past few decades it became a success story with large scale industrial applications. However, frequently properties need to be checked against an unknown or very
complex system. In such a situation 
classical exploration techniques like model-checking might fail.  The model-checking problem asks whether all runs satisfy a given
specification. If the specification is written in  monadic second-order logic, then all runs obeying the specification 
can be expressed effectively by some B{\"u}chi automaton (BA for short). 
If the abstract model of the system is given by some finite transition system, then the model-checking problem becomes an inclusion
problem on $\omega$-regular languages: all runs of the transition system 
must be accepted by the BA for the specification, too. 
In formal terms we wish to check $L(\cA) \sse L(\phi)$ where 
$\cA$ is the transition system of the system and $\phi$ is a formula for the specification. Typically testing inclusion is expensive, hence 
it might be better to check the equivalent assertion $L(\cA) \cap L(\neg \phi) = \es$.
 This is
a key fact, because then the verification problem becomes a 
reachability problem in finite graphs.

Whereas the formulas are typically rather small, so we might be able to construct the B\"uchi automaton for $L(\neg \phi)$, the transition systems tend to be very large.
Thus, ``state explosion'' on the system side might force us to use weaker concepts.
The idea is to construct a ``monitor'' for a given 
specification. A monitor observes the system during runtime. 
It is a finite deterministic automaton with at most two distinguished states 
$\bot$ and $\top$. If it reaches the state $\bot$, the monitor stops 
and raises an ``alarm'' that no continuation of the so far observed run will satisfy the specification. If it reaches $\top$, the monitor stops because all continuations will satisfy the specification. Usually, this means we must switch to a finer monitor. Finally, we say that a language is monitorable, if 
in every state of the monitor it is possible to reach either $\bot$ or $\top$
or both. 

The formal definition 
of \moni properties has been given in \cite{PnueliZ06} by
Pnueli and Zaks. It generalizes the notion of a \emph{safety property}
because for a safety property some deterministic finite automaton can raise an alarm  $\bot$ by observing a finite ``bad prefix'', 
once the property is violated. 
The extension to the more general notion of monitorability is that a \moni property gives also a positive 
feedback $\top$, if all extensions of a finite prefix obey the specification. 
 Monitors are sometimes easy 
to implement and have a wide range of applications. See for example \cite{LeuckerS08jlap} and the references therein. 
Extensions and applications for stochastic automata have been proposed in 
Sistla et al., see \cite{GondiPS09,SistlaZF11}.

In the present paper we discuss various monitor constructions. 
%We emphasize that in many cases very small monitors exist, but then they become essentially useless. So, the challenge is to define a ``good'' monitor. 
A monitor for a safety property $L$ can have much less states than the smallest 
DBA accepting $L$. For example, let $\Sig = \os{a,b}$ and $n \in \N$.
Consider the language $L= a^nba\Sig^\oo \sm \Sig^* bb \Sig^\oo$. The reader is invited to check that $L$ is a safety property and every DBA accepting $L$ has more than $n$ states. But there is monitor with three states, only. The monitor patiently waits to see an occurrence of a factor  $bb$ and then switches to 
$\bot$. Hence, there is no bound between a 
minimal size of an accepting DBA and the minimal size of a possible monitor. This option has been actually one of the main motivations to introduce the notion of monitor. 

There are many deterministic languages which are far away from being \moni. 
Consider again $\Sig = \os{a,b}$ and let $L$ be the deterministic language 
of ``infinitely many $a$'s''. It is shown in \cite{DiekertL2014tcs} that $L$ cannot be written as any countable union of \moni languages. 
On the other hand, if $L$ is \moni and also accepted by some DBA with $n$ states and a single initial state, then there is monitor accepting $L$ with at most $n$ states. 

In the last section of this paper we discuss the question how to decide whether a language is \moni
and its complexity. If the input is a B\"uchi automaton, then deciding safety, liveness, or monitorability is PSPACE-complete.
If the input is an $\LTL$ formula, then deciding safety remains 
PSPACE-complete. It becomes surprisingly difficult for liveness: 
EXPSPACE-complete. For monitorability the complexity is wide open: we only know that is PSPACE-hard and that it can be solved in EXPSPACE.

%%%%%%%%%%%%%%%%%%%%%%%%%%%%%%%%%%%%%%
\section{Preliminaries}\label{sec:prem}
We assume that the reader is familiar with the basic facts about 
automata theory for infinite words as it is exposed in the survey 
\cite{tho90handbook}.  In our paper $\Sig$ denotes a finite nonempty alphabet. 
We let $\Sig^*$ (resp.~$\Sig^\oo$) be the set of finite (resp.~infinite) words over $\Sig$. Usually, lower case letters like $a$, $b$, $c$ denote letters 
in $\Sig$, $u, \ldots, z$ denote finite words, $1$ is the empty word, and
 $\alp$, $\bet$, $\gam$ denote infinite words. By  language we mean a subset $L \sse \Sig^\oo$. The complement of $L$ w.r.t.~$\Sig^\oo$ is denoted by 
 $L^{\compl}$. Thus, $L^{\compl} = \Sig^\oo \sm L$. 

A \emph{B{\"u}chi automaton} (\emph{BA} for short) is a tuple $\cA= (Q,\Sig,\del,I,F)$
where $Q$ is the nonempty finite set of states, $I\sse Q$ is the set of initial states,
$F\sse Q$ is the set of final states, and $\del \sse Q\times \Sig \times Q$ is the transition relation. 
The accepted language $L(\cA)$ is the set of infinite words $\alp \in \Sig^\oo$ which label an infinite path in $\cA$ which begins at some state in $I$ and visits some state in $F$ infinitely often. Languages of type $L(\cA)$ are called \emph{$\oo$-regular}.

If for each $p\in Q$ and $a \in \Sig$ there is at most one $q\in Q$ with $(p,a,q)\in \del$, then $\cA$ is called 
 \emph{deterministic}. We write \emph{DBA} for deterministic B{\"u}chi automaton. In a DBA we view $\del$ as a partially defined function and we also write $p\cdot a= q$ instead of  $(p,a,q)\in \del$. Frequently it is asked that 
 a DBA has a unique initial state. This is not essential, but in order to follow the standard notation $(Q,\Sig,\del,q_0,F)$ refers to a BA where $I$ is the singleton $\os{q_0}$.

A \emph{deterministic weak B{\"u}chi automaton} (\emph{DWA} for
short) is a DBA  where all states in a strongly connected
component are either final or not final. Note that a strongly connected component may have a single state because the underlying directed graph may have self-loops. 
A language is accepted by some DWA \IFF it is deterministic and simultaneously codeterministic. The result is in \cite{Staiger83} which in turn is based on previous  papers by Staiger and Wagner \cite{sw74eik} and Wagner \cite{Wagner79}.

According to \cite{PnueliZ06} a \emph{monitor}
is a finite deterministic transition system $\cM$
with at most two 
distinguished states $\bot$ and $\top$ such that for all states $p$
either there exist a path from $p$ to $\bot$, or to $\top$, or to both. 
It is a \emph{monitor for an $\oo$-language $L \sse \Sig^\oo$} if 
the following additional properties are satisfied:
\begin{itemize}
\item If $u$ denotes the label of a path from an initial state to $\bot$, then 
$u\Sig^\oo \cap L = \es$.
\item If $u$ denotes the label of a path from an initial state to $\top$, then 
$u\Sig^\oo \sse L$.
\end{itemize}
A language $L \sse \Sig^\oo$ is called \emph{monitorable} if there exists a
monitor for $L$. 
Thus, even non regular languages might be \moni. 
If a property is monitorable, then the following holds:
\begin{equation}\label{eq:neccon}
\forall x\, \exists w: xw\Sigma^\oo \subseteq L \vee xw\Sigma^\oo
\cap L=\es\,.
\end{equation}
The condition in (\ref{eq:neccon}) is not sufficient for non-regular languages: indeed
consider $L = \set{a^nb^na}{n\in \N}\Sig^\oo$. There is no finite state monitor for this language.
In the present paper, the focus is on monitorable $\oo$-regular languages.
For $\oo$-regular languages (\ref{eq:neccon}) is also sufficient; and \prref{rem:staig} below shows an equivalent condition for
monitorability (although stronger for non-regular languages).

The common theme in ``automata on infinite words'' is that finite state devices serve to classify $\omega$-regular properties.
The most prominent classes are:
\begin{itemize}
\item \emph{Deterministic properties:} 
there exists a DBA.
\item \emph{Deterministic properties which are simultaneously codeterministic:}  
there \goodbreak exists a DWA.  
\item \emph{Safety properties:} there exists a DBA where all states are final. 
\item \emph{Cosafety properties:} the complement is a safety property.
\item \emph{Liveness properties:} there exists a BA where from all states 
there is a path to some final state lying in a strongly connected component.
\item \emph{Monitorable properties:} there exists a monitor. 
\end{itemize}
According to our definition of a monitor, not both states $\bot$ and $\top$ need to be defined. Sometimes it is enough to see $\bot$ or $\top$. For example, 
let $\es \neq L\neq \Sig^\oo$ be a safety property and $\cA= (Q, \Sig, \del, I,Q)$ be a DBA accepting $L$ where all states are final. Since 
$\es \neq L$ we have $I \neq \es$. Since $L\neq \Sig^\oo$, the partially defined transition function $\del$ is not defined everywhere. Adding a state
$\bot$ as explained above turns $\cA$ into a monitor $\cM$ for $L$ where the state
space is $Q \cup \os \bot$. There is no need for any state $\top$.
The monitor $\cM$ also accepts $L$. This is however not the general case.
%%%%%%%%%%%%%%%%%%%%%%%%%%%%%%%%%%%%%%
\section{Topological properties}\label{sec:tp}
A topological space is a pair $(X,\cO)$ where $X$ is a set and
$\cO$ is collection of subsets of $X$ which is closed under arbitrary 
unions and finite intersections. In particular, $\es, X \in \cO$. A subset $L \in \cO$ is called \emph{open}; and its complement $X\sm L$  is called \emph{closed}.

For $L \sse X$ we denote by $\ov L$ the intersection over all 
closed subsets $K$ such that $L \sse K \sse X$. 
It is the \emph{closure} of $L$. 
The complement $X\sm L$ is denoted by 
$L^{\compl}$. 

A subset $L\sse X $ is called \emph{nowhere dense} if its closure $\ov L$ does not contain any open subset. The classical example of the uncountable Cantor set $C$ inside the closed interval 
$[0,1]$ is nowhere dense. It is closed and does not have any open subset.
On the other hand, 
the subset of rationals $\Q$ inside $\R$ (with the usual topology) satisfies $\ov \Q = \R$. Hence, $\Q$ is ``dense everywhere'' although $\Q$ itself 
does not have any open subset.

The \emph{boundary} of $L$ is sometimes denoted as $\del(L)$; it is defined 
by $$\del(L)= \ov L \cap \ov{L^{\compl}}.$$

In a metric space $B(x,1/n)$ denotes the \emph{ball of radius $1/n$}.
It is the set of $y$ where the distance between $x$ and $y$ is less than $1/n$.  A set is open \IFF it is some union of balls, and the closure of $L$ can be written as   
$$\ov L = \bigcap_{n\geq 1}\; \bigcup_{x \in L}B(x,1/n).$$
In particular, every closed set is a countable intersection of open sets. 
Following the traditional notation we let $F$ be the family 
of closed subsets and $G$ be 
the family of open subsets. Then  $F_{\sig}$ denotes the family 
of countable unions of closed subsets  and 
$G_{\del}$ denotes the family 
of countable intersections of open subsets. We have just seen $F \sse G_\del$,  and we obtain $G \sse F_\sig$ by duality. Since 
$G_{\del}$ is closed under finite union, $G_\del\cap F_\sig$ is Boolean algebra which contains all open and all closed sets. 

In this paper we deal mainly with $\oo$-regular sets. These are subsets 
of $\Sig^\oo$; and $\Sig^\oo$ is endowed with a natural topology 
where the open sets are defined by the sets of the form $W\Sig^\oo$ where $W\sse \Sig^*$. It is called the \emph{Cantor topology}. The Cantor topology
corresponds to a complete ultra metric space: for example, we let $d(\alp,\bet) =1/n$ for $\alp, \bet \in \Sig^\oo$ where $n-1\in \N$ is the length of a maximal common prefix of $\alp$ and $\bet$. (The convention is $0= 1 /\infty$.)

The following dictionary translates notation about $\oo$-regular sets
into its topological counterpart. 
\begin{itemize}
\item Safety = closed sets = $F$.  
\item Cosafety = open sets = $G$.
\item Liveness = \emph{dense} = closure is $\Sig^\oo$. 
\item Deterministic =  $G_{\del}$, see \cite{Landweber69}.
\item Codeterministic =  $F_{\sig}$, by definition and the previous line. 
\item Deterministic  and simultaneously codeterministic = 
$G_\del\cap F_\sig$, by definition. %\cite{Staiger83}
\item Monitorable = the boundary is nowhere dense, see  \cite{DiekertL2014tcs}.
\end{itemize}

Monitorability depends on the ambient space $X$. Imagine we embed
$\R$ into the plane $\R^{2}$ in a standard way. Then $\R$ is a line which is nowhere dense in $\R^{2}$. As a consequence every subset $L\sse \R$ is \moni in $\R^{2}$. The same phenomenon happens  for $\oo$-regular languages.
Consider the embedding of $\os{a,b}^\oo$ into $\os{a,b,c}^\oo$ by choosing 
a  third letter $c$. Then $\os{a,b}^\oo$ is nowhere dense in $\os{a,b,c}^\oo$ and hence, every subset $L \sse \os{a,b}^\oo$ is \moni in $\os{a,b,c}^\oo$. The monitor has $3$ states. One state is initial and by reading $c$ 
we switch into the state $\bot$. The state $\top$ can never be reached. 
In some sense this $3$-state minimalistic monitor is useless: it tells us almost nothing 
about the language. Therefore the smallest possible monitor is rarely the best one.  

\begin{remark}\label{rem:forfact}
In our setting many languages are \moni because there exists a ``forbidden factor'', for example a letter $c$ in the alphabet which is never used. More precisely, let $L \sse \Sig^\oo$ be any subset and assume that there exists a finite word $f \in \Sig^*$
such that either $\Sig ^* f \Sig^\oo \sse L$ or $\Sig ^* f \Sig^\oo \cap L= \es$. Then $L$ is \moni. 
Indeed, the monitor just tries to recognize $\Sig ^* f \Sig^\oo$. Its size is $\abs f +2$ and can be constructed in linear time from $f$ by algorithms of Matiyasevich \cite{mat73} or Knuth-Morris-Pratt \cite{KMP}.
\end{remark}

\section{Constructions of monitors}\label{sec:cm}
\prref{rem:forfact} emphasizes that one should not try simply to minimize monitors. The challenge is to construct ``useful'' monitors. In the extreme, think that we encode a language $L$ in printable ASCII code, hence it is a subset of $\os{0,1}^*$. 
But even in using a $7$-bit encoding there were $33$ non-printable characters. 
A monitor can choose any of them and then waits patiently whether this
very special encoding error ever happens. This might be a small monitor,
but it is of little interest. It does not even check all basic syntax
errors. 

\subsection{Monitors for $\oo$-regular languages in $G_\del \cap F_\sig$}\label{sec:stcm}
The $\oo$-regular languages in $G_\del \cap F_\sig$ are those which are deterministic and simultaneously codeterministic. In every complete metric space (as for example the Cantor space $\Sig^\oo$) 
all sets in 
$G_\del\cap F_\sig$ have a boundary which is nowhere dense.  Thus, deterministic  and simultaneously codeterministic languages are \moni by a purely topological 
observation, see \cite{DiekertL2014tcs}.

Recall that there is another characterization of $\oo$-regular languages in $G_\del \cap F_\sig$ due to Staiger, \cite{Staiger83}. 
It says that these are the languages which are accepted by some DWA, thus by some DBA where in every strongly connected component either all states are final or none is final.

In every finite directed graph  there is at least one  strongly connected component which cannot be left anymore. In the minimal DWA (which exists and which is unique and where, without restriction, the transition function is totally defined)
these end-components consist of a single state which can be identified either with $\bot$ or with $\top$. Thus, the DWA is itself a monitor. 
Here we face the problem that this DWA might be very large and also too complicated for useful monitoring.

\subsection{General constructions}\label{sec:gcm}
Let $w \in \Sig^*$ be any word. Then the language  $L= w \Sig^\oo$ is \emph{clopen} meaning simultaneously open and closed. The minimal monitor for $w \Sig^\oo$ must read the whole word $w$ before it can make a decision; and the minimal monitor has exactly $\abs w +2$ states. 
On the other hand, its boundary, $\ov L \cap \ov{L^{\compl}}$ is empty and therefore nowhere dense. 
This suggests that deciding monitorability might be much simpler than constructing a monitor. 
For deciding we just need any DBA accepting the safety property $\ov L \cap \ov{L^{\compl}}$. Then we can see on that particular DBA whether $L$ is \moni, although this particular DBA might be of no help for monitoring. 
Phrased differently, there is no bound between the size of a DBA certifying that $L$ is \moni and the size of an actual monitor for $L$. 
 
Indeed, the standard construction for a monitor $M_{L}$ is quite different {}from a direct construction of the  DBA for 
the boundary, see for example \cite{DiekertL2014tcs}. 
The construction for the monitor $M_{L}$ is as follows. Let $L\sse \Sig^\oo$ be \moni and given by some BA. First, we construct
two DBAs: one DBA with state set $Q_1$, for the closure $\ov L$ and another one with state set $Q_2$ for the closure of the complement $\ov{L^{\compl}}$. 
We may assume that in both DBAs all states  are final and reachable from a unique initial state $q_{01}$ and $q_{02}$, respectively. Second,  let $Q'= Q_1 \times Q_2$. Now, if we are in a
state $(p,q)\in Q'$ and we want to read a letter $a\in \Sig$, then exactly one out of the three
possibilities can happen. 
\begin{enumerate}
\item The states $p\cdot a$ and  $q\cdot a$ are defined, in which case we let $(p,q)\cdot a = (p\cdot a,q\cdot a)$.
\item The state $p\cdot a$ is not defined, in which case we let $(p,q)\cdot a =\bot$.
\item The state $q\cdot a$ is not defined, in which case we let $(p,q)\cdot a = \top$.
\end{enumerate}
Here $\bot$ and $\top$ are new states. Moreover, we let $q\cdot a= q$ for $q \in \os{\bot,\top}$ and $a \in \Sig$. 
Hence, the transition function is totally defined.
Finally, we let $Q\sse Q'\cup \os{\bot,\top}$ be the subset which is reachable from the initial state $(q_{01},q_{02})$.
Since $L$ is \moni, 
$Q \cap \os{\bot,\top}\neq \es$; and $Q$ defines a set of a monitor $\cM_L$.
Henceforth, the monitor $\cM_L$ above is called a \emph{standard monitor for $L$}. 
The monitor has exactly one initial state. {}From now on, 
for simplicity, we assume that every monitor $\cM$ has exactly one initial state and that the transition function is totally defined. Thus, we can denote a monitor $\cM$ as a tuple
\begin{equation}\label{eq:initmoni}
\cM = (Q,\Sig,\del,q_0,\bot,\top).
\end{equation}
Here, $\del:Q \times \Sig \to Q,\, (p,a) \mapsto p \cdot a$ is the  transition function, 
$q_0$ is the unique initial state, $\bot$ and $\top$ are distinguished states with $Q \cap \os{\bot,\top} \neq \es$.

\begin{definition}\label{def:morphmoni}
Let $\cM = (Q,\Sig,\del,q_0,\bot,\top)$, $\cM' = (Q',\Sig,\del',q'_0,\bot,\top)$ be monitors. A \emph{morphism} between $\cM$ and $\cM'$ 
is mapping $\phi: Q \cup \os{\bot,\top} \to Q' \cup \os{\bot,\top}$ such that
$\phi(q_0) = q'_0$, $\phi(\bot) = \bot$, $\phi(\top) = \top$, and
$\phi(p\cdot a) =\phi(p)\cdot a$ for all $p\in Q$ and $a \in \Sig$. 

If $\phi$ is surjective, then $\phi$ is called an \emph{epimorphism}.
\end{definition}

Another canonical monitor construction uses the classical notion of right-congruence. 
A \emph{right-congruence} for the monoid $\Sig^*$ is an equivalence relation 
$\sim$ such that $x\sim y$ implies  $xz\sim yz$ for all $x,y,z \in \Sig^*$.
There is a canonical right-congruence $\sim_{L}$ associated with every
$\oo$-language $L \sse \Sig^\oo$:
for $x\in \Sig^*$ denote by $L(x) = \set{\alp\in \Sig^\oo}{x\alp \in L}$ the  \emph{quotient} of $L$ by $x$. 
 Then defining $\sim_{L}$ by $x\sim_{L} y \iff L(x) = L(y)$ yields a right-congruence. 
 More precisely, 
 $\Sig^*$ acts on 
 the set of quotients $Q_L = \set{L(x)}{x\in \Sig^*}$ on the right, and the formula for the action becomes $L(x)\cdot z =  L(xz)$. Note that this is well-defined.
 This yields the \emph{associated automaton} \cite[Section 2]{Staiger83}. It the finite deterministic transition system with 
 state set $Q_{L}$ and arcs $(L(x),a,L(xa))$ where 
 $x \in \Sig^*$ and $a \in \Sig$.

 There is a canonical initial state $L= L(1)$,
 but unlike in the case of regular sets over finite words there is no good notion of final states in $Q_L$ for infinite words. 
 The right congruence is far too coarse to recognize $L$, in general. 
For example, consider the deterministic language $L$ of ``infinitely many $a$'s'' in $\os{a,b}^\oo$. For all $x$ we have $L=L(x)$, but in order to recognize $L$ we need two states.

It is classical that if $L$ is $\oo$-regular, then the set $Q_L$ is finite, but the converse fails badly \cite[Section 2]{Staiger83}: 
 there are uncountably many languages where 
 $\abs{Q_L} = 1$. To see this define for each $\alp \in \Sig^\oo$ a set
 $$L_{\alp}= \set{\bet\in \Sig^\oo}{\alp \text { and} \bet \text{ share an infinite suffix}}.$$ All $L_{\alp}$ are countable, but the union 
$\set{L_{\alp}}{\alp\in \Sig^\oo}$ 
  covers the uncountable Cantor space $\Sig^\oo$. Hence, there are uncountably many $L_{\alp}$. However, $\abs{Q_{L_\alp}} = 1$ since $L_{\alp}(x) =L_{\alp}$ for all $x$. 
 
Recall that a monitor is a DBA where the monitoring property is not defined 
using final states, but it is defined using the states $\bot$ and $\top$. 
Thus, a DBA with an empty set of final states can be used as a monitor as long as $\bot$ and $\top$ have been assigned and the required properties for a monitor are satisfied.

\begin{proposition}\label{prop:moniquot}
Let $L \sse \Sig^\oo$ be $\oo$-regular and monitorable. 
Assume that $L$ is accepted by some BA with $n$ states. 
As above let $Q_L = \set{L(x)}{x\in \Sig^*}$ and denote $\bot = \es$ and 
$\top = \Sig^\oo$. Then $\abs{Q_L} \leq 2^n$ and
$Q_{L} \cup \os{\top,\bot}$ is the set of states for a monitor for $L$. At least one of the states in $\os{\top,\bot}$ is reachable from the initial state $L= L(1)$. 
\end{proposition}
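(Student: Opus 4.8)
The plan is to realize the claimed monitor as the \emph{associated automaton} of $L$ — the right-congruence transition system on $Q_L$ with arcs $(L(x),a,L(xa))$ — after designating its two special states $\bot = \es$ and $\top = \Sig^\oo$, and to read off the monitor axioms from the algebra of quotients together with the monitorability hypothesis. First I would establish $\abs{Q_L}\le 2^n$ by a subset-construction argument. Fix a BA $\cA = (Q,\Sig,\del,I,F)$ with $\abs Q = n$ and $L(\cA)=L$, and for $x\in\Sig^*$ let $R(x)\sse Q$ be the set of states reachable from $I$ by reading $x$. Any accepting run on $x\alp$ must sit in some state of $R(x)$ after the prefix $x$, so $L(x)=\bigcup_{q\in R(x)}L_q$, where $L_q$ is the language of $\cA$ started in $q$. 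Hence $L(x)$ depends only on the subset $R(x)$, and since there are at most $2^n$ such subsets, there are at most $2^n$ distinct quotients.

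Next I would check that $\bot=\es$ and $\top=\Sig^\oo$ are absorbing in a way that is consistent with the right action, so that the transition function on $Q_L\cup\os{\bot,\top}$ is total, deterministic and well defined. If $L(x)=\es$ then no $x\bet$ lies in $L$, whence a fortiori $L(xa)=\es$; dually $L(x)=\Sig^\oo$ forces $L(xa)=\Sig^\oo$. Thus putting $L(x)\cdot a = L(xa)$ and declaring $\bot,\top$ absorbing is coherent even when $\es$ or $\Sig^\oo$ already occurs as a quotient. The two soundness conditions are then immediate: a path from the initial state $L=L(1)$ labelled $u$ ends in $L(u)$, and $L(u)=\es$ is literally $u\Sig^\oo\cap L=\es$, while $L(u)=\Sig^\oo$ is literally $u\Sig^\oo\sse L$.

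The only place where the hypothesis must be invoked is the defining monitor axiom that every state can still reach a verdict, and here monitorability enters through condition (\ref{eq:neccon}). Every state of $Q_L$ has the form $L(x)$, and applying (\ref{eq:neccon}) to this $x$ yields a word $w$ with $xw\Sig^\oo\sse L$ or $xw\Sig^\oo\cap L=\es$, i.e.\ $L(xw)\in\os{\top,\bot}$; reading $w$ therefore drives $L(x)$ to $\top$ or $\bot$, while $\bot$ and $\top$ reach themselves trivially. Taking $x=1$ gives in particular a path from $L(1)$ to $\os{\top,\bot}$, which is the final assertion. The point I would stress — the only conceptual subtlety, the rest being bookkeeping — is that although the right congruence is in general far too coarse to \emph{recognize} $L$ (as the ``infinitely many $a$'s'' example with $\abs{Q_L}=1$ shows), it is nonetheless exactly rich enough to \emph{monitor} it: a monitor requires only verdict-reachability and the two soundness conditions, none of which demand distinguishing accepting from non-accepting behaviour inside the quotient structure.
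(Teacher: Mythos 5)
Your proof is correct and follows exactly the route the paper intends: the monitor is the right-congruential (associated) automaton on $Q_L$ with $\bot=\es$, $\top=\Sig^\oo$ made absorbing, the bound $\abs{Q_L}\le 2^n$ comes from the subset construction (as in the proof of \prref{cor:moniexpstan}), and verdict-reachability is read off from condition~(\ref{eq:neccon}). The paper leaves this proposition without an explicit proof, and your write-up — including the remark that $\es$ or $\Sig^\oo$ may already occur among the quotients, and that the coarseness of $\sim_L$ is harmless for monitoring though fatal for recognition — supplies precisely the missing details.
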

%
%\begin{proof}
% Trivial.
%\end{proof}
The monitor in \prref{prop:moniquot} with state space $Q_L$ is denoted by $\cA_L$ henceforth. We say that $\cA_L$ is the \emph{right-congruential} monitor for $L$.

\begin{proposition}\label{prop:epi}
Let $\cA$ be the right-congruential monitor for $L$. 
Then the mapping $$L(x) \mapsto \phi(L(x))= (\ov L(x),\ov {L^{\compl}}(x))$$
induces a canonical \epi from $\cA_L$ onto some standard monitor $\cM_L$.\end{proposition}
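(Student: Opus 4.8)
The plan is to write down $\phi$ explicitly and then check the three requirements of \prref{def:morphmoni}: that $\phi$ respects the initial state and the distinguished states $\bot,\top$, that it commutes with the transition function, and that it is onto. I would take $\cM_L$ to be the standard monitor built from the two \emph{right-congruential} (minimal) DBAs, namely the associated automaton of the safety language $\ov L$ (with the dead quotient $\es$ deleted, so that all remaining states are final) for $Q_1$, and the analogous automaton of $\ov{L^{\compl}}$ for $Q_2$. Both are finite because $Q_L$ is finite, and both have all states final, so they are admissible inputs to the standard construction.

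First I would isolate the one genuinely topological fact on which everything rests, namely that closure commutes with right quotients:
\[ \ov L(x) = \ov{L(x)} \qquad\text{for every } L\sse\Sig^\oo \text{ and } x\in\Sig^* . \]
In the Cantor topology $\bet \in \ov L$ \IFF every finite prefix $w$ of $\bet$ satisfies $w\Sig^\oo \cap L \neq \es$. Applying this to $\bet = x\alp$ and comparing with the same description of $\ov{L(x)}$, the prefixes of $x\alp$ split into prefixes of $x$ and words $xv$ with $v$ a prefix of $\alp$; the latter yield exactly the defining conditions of $\ov{L(x)}$, while the former are subsumed by the instance $v=1$. Hence the two sets coincide. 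Applying the identity also to $L^{\compl}$, together with $L^{\compl}(x) = \Sig^\oo \sm L(x)$, shows that both coordinates of $\phi(L(x))$ depend only on the quotient $L(x)$ and not on the representative $x$. Thus $\phi$ is well defined on $Q_L$, and I set $\phi(\bot)=\bot$, $\phi(\top)=\top$ with $\bot=\es$ and $\top=\Sig^\oo$.

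Second, I would verify the morphism condition $\phi(L(xa)) = \phi(L(x))\cdot a$ using the same identity. The key observation is that $\ov L(x)\cdot a = \ov L(xa)$ is defined in the DBA for $\ov L$ exactly when $\ov L(xa)\neq\es$, which by the identity is equivalent to $L(xa)\neq\es$; symmetrically the second coordinate is defined \IFF $L(xa)\neq\Sig^\oo$. Matching this against the three cases of the standard construction gives: if $L(xa)\notin\os{\es,\Sig^\oo}$ we are in case~1 and both sides equal $(\ov L(xa),\ov{L^{\compl}}(xa))$; if $L(xa)=\es$ we are in case~2 and both sides equal $\bot$; if $L(xa)=\Sig^\oo$ we are in case~3 and both sides equal $\top$. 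The three cases are mutually exclusive because $\ov L \cup \ov{L^{\compl}} = \Sig^\oo$ forces at most one coordinate to die. The initial state is handled by $\phi(L(1)) = (\ov L,\ov{L^{\compl}}) = (q_{01},q_{02})$, where as usual I assume $\es\neq L\neq\Sig^\oo$ to avoid degenerate automata.

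Finally, surjectivity is then automatic: extending the morphism to words, any state of $\cM_L$ reached by a word $x$ equals $q'_0\cdot x = \phi(L(1))\cdot x = \phi(L(x))$, and since the state set of the standard monitor is by construction exactly the reachable part, $\phi$ is onto and hence an \epi. I expect the only real obstacle to be the commutation identity $\ov L(x)=\ov{L(x)}$; once it is in hand the remaining verifications reduce to a routine matching of the three transition cases, and the canonicity of $\phi$ follows from that of the right-congruence $\sim_L$ and of the closure operator.
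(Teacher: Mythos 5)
Your proof is correct and follows essentially the same route as the paper's: everything rests on the commutation identity $\ov L(x) = \ov{L(x)}$ (and its analogue for $L^{\compl}$), which gives well-definedness of $\phi$, after which the transition-case analysis and surjectivity are routine. The paper's own proof records only this well-definedness observation and the identification of states with an empty coordinate with $\bot$ or $\top$; your explicit verification of the morphism condition and of surjectivity just fills in what the paper leaves implicit.
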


\begin{proof}
Observe that $\ov L(x) = \ov{L(x)}$ and 
$L^{\compl}(x) = L(x)^\compl$. Hence, $(\ov L(x),\ov {L^{\compl}}(x)) = (\ov {L(x)},\ov {L(x)^\compl})$ and $\phi(L(x))$ is well-defined. Now, if
$\ov L(x)\neq \es$ and $\ov {L^{\compl}}(x) \neq \es$, then $\phi(L(x))\in Q$ where 
$Q$ is the state space of the standard monitor $\cM$. 
If $\ov L(x)=\es$ then we can think that all $(\es,\ov {L^{\compl}}(x))$ denote the 
state $\bot$; and if $\ov {L^{\compl}}(x)=\es$ then we can think that all $(\ov L(x),\es)$ denote the 
state $\top$.
\qed
\end{proof}
\begin{corollary}\label{cor:moniexpstan}
Let $L \sse \Sig^\oo$ be monitorable and given by some BA with $n$ states. 
Then some standard  monitor $\cM_{L}$ for $L$ has at most $2^n$
states. 
\end{corollary}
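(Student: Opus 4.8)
The plan is to combine the two immediately preceding propositions. \prref{prop:moniquot} bounds the size of the right-congruential monitor $\cA_L$, and \prref{prop:epi} exhibits a canonical epimorphism from $\cA_L$ onto a standard monitor. Since an epimorphism is surjective, it can only decrease the number of (reachable) states, so the bound transfers directly.

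First I would invoke \prref{prop:moniquot} to record that the state space of the right-congruential monitor $\cA_L$ is $Q_L \cup \os{\bot,\top}$ with $\abs{Q_L}\le 2^n$. The essential bookkeeping observation is that the two distinguished states are themselves quotients: $\bot = \es = L(x)$ for any $x$ with $x\Sig^\oo \cap L = \es$, and $\top = \Sig^\oo = L(x)$ for any $x$ with $x\Sig^\oo \sse L$. Concretely, applying the subset construction to the given $n$-state BA shows that each quotient $L(x)$ is determined by the set of states reachable on reading $x$, so there are at most $2^n$ distinct quotients, and the special quotients $\es$ and $\Sig^\oo$ are among these subsets. Hence, whenever $\bot$ or $\top$ is reachable it already belongs to $Q_L$, and the reachable state space of $\cA_L$ has at most $2^n$ elements.

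Next I would apply \prref{prop:epi}, which gives a canonical epimorphism $\phi$ from $\cA_L$ onto some standard monitor $\cM_L$, namely $\phi(L(x)) = (\ov L(x), \ov{L^{\compl}}(x))$ with $\phi(\bot)=\bot$ and $\phi(\top)=\top$. Because $\phi$ is surjective and maps the reachable states of $\cA_L$ onto the reachable states of $\cM_L$, the number of states of $\cM_L$ is at most the number of states of $\cA_L$, hence at most $2^n$; here $\cM_L$ denotes precisely the standard monitor obtained as the image of $\phi$. This yields the claim.

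The only genuine obstacle is the potential off-by-two around $\bot$ and $\top$: a naive count of $Q_L \cup \os{\bot,\top}$ would suggest $2^n+2$ rather than $2^n$. This is exactly what the bookkeeping remark above dissolves, since $\es$ and $\Sig^\oo$ are quotients already counted among the at most $2^n$ subsets produced by the subset construction, so passing to the distinguished states $\bot,\top$ introduces no new states. Everything else is a direct consequence of \prref{prop:moniquot} and \prref{prop:epi}.
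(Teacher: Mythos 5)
Your proof is correct, but it takes a different route from the paper's. You obtain the bound by composing \prref{prop:moniquot} with \prref{prop:epi}: the right-congruential monitor $\cA_L$ has at most $2^n$ reachable states (your bookkeeping point that $\bot=\es$ and $\top=\Sig^\oo$ are themselves quotients $L(x)$, hence already counted among the at most $2^n$ subsets arising from the subset construction, is exactly the right way to dissolve the spurious $2^n+2$), and the canonical epimorphism of \prref{prop:epi} sends the reachable part of $\cA_L$ onto all of $\cM_L$ (morphisms preserve the initial state and commute with the action, and $\cM_L$ consists of reachable states only), so surjectivity transfers the bound. The paper instead argues directly: it applies the subset construction to the given BA to obtain a DBA for $\ov{L}$ whose states are the nonempty subsets of $Q$, and observes that the DBA for $\ov{L^{\compl}}$ obtained the same way tracks the \emph{same} subset $P$, so the reachable states of the product underlying the standard monitor are diagonal pairs $(P,P)$ --- this is the observation that prevents the naive $2^n\times 2^n$ blow-up --- yielding at most $2^n-2$ such pairs plus $\top$ and possibly $\bot$, with the case where $L^{\compl}$ is dense handled separately. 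Your version is shorter and makes the logical dependence on the two preceding propositions explicit; the paper's version is self-contained, exhibits the standard monitor concretely, and explains structurally why the product construction does not square the state count.
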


\begin{proof}
Without restriction we may assume that in the BA $(Q,\Sig, \del,I,F)$ accepting $L$ every state $q\in Q$ leads to some final state. The usual subset construction leads first  to a DBA accepting  $\ov {L}$,
where all states are final and the states of this DBA are the nonempty subsets 
of $Q$. Thus, these are $2^n-1$ states. 
Adding the empty set $\es = \bot$ 
we obtain a DBA with $2^n$ states where the transition function is defined everywhere. If the complement $L^{\compl}$ is dense, this yields a standard monitor. In the other case we can use the subset construction also for a DBA 
accepting $\ov {L^{\compl}}$. 
In this case we remove all subsets $P \sse Q$ 
where $L(Q,\Sig, \del,P,F) = \Sig^\oo$. (Note, for all $a \in \Sig$ we have: 
if $L(Q,\Sig, \del,P,F) = \Sig^\oo$ and $P'= \set{q\in Q}{\exists p \in P:(p,a,q)\in \del}$, then $L(Q,\Sig, \del,P',F) = \Sig^\oo$, too.) 
Thus, if $L^{\compl}$ is not dense, then  the construction for a standard monitor has 
at most $2^n-2$ states of the form $(P,P)$ where $\es \neq P$ and $L(Q,\Sig, \del,P,F) \neq \Sig^\oo$. In addition there exists the reachable state 
$\top$ and possibly the state $\bot$. 
\qed\end{proof}

\prref{prop:epi} leads to the question of a canonical minimal monitor, at least 
for a safety language where a minimal accepting DBA exists. The answer
is ``no'' as we will see in \prref{ex:anb} later. 

Let us finish the section with a result on arbitrary \moni subsets of $\Sig^\oo$
which is closely related to  \cite[Lemma 2]{Staiger76a}. 
Consider any subset $L \sse \Sig^\oo$ where the set of quotients
$Q_L = \set{L(x)}{x\in \Sig^*}$ is finite (=``zustandsendlich'' or ``finite state''in the terminology of \cite{Staiger76a}). If $Q_L$ is finite, then  
$L$ is \moni \IFF the 
boundary is nowhere dense. In every topological space this latter condition is equivalent to the condition that 
the interior of $L$ 
is dense in its closure $\ov L$. Translating Staiger's result in 
\cite{Staiger76a} to the notion of monitorability we obtain the 
following fact.  
\begin{proposition}%[c.f.~\cite{Staiger76a}]
\label{prop:staiger}
Let $L \sse \Sig^\oo$ be any monitorable language and let $\cM$ be a monitor for $L$ with $n$ states. 
Then there exists a finite word $w$ of length at most $(n-1)^2$ such that
for all $x\in \Sig^*$ we have either $xw\Sig^\oo \sse L$  or 
$xw\Sig^\oo \cap L = \es$.
\end{proposition}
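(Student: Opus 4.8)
The plan is to prove the slightly stronger statement that a single word $w$ with $\abs w \le (n-1)^2$ drives \emph{every} state of $\cM$ into the absorbing pair $\os{\bot,\top}$, i.e.\ $p\cdot w \in \os{\bot,\top}$ for all $p\in Q$. Granting this, the proposition is immediate: for any $x\in \Sig^*$ the transition function is total (see \refeq{eq:initmoni}), so $q_0\cdot xw = (q_0\cdot x)\cdot w \in \os{\bot,\top}$. If this state is $\bot$, then $xw$ labels a path from the initial state to $\bot$ and the monitor property gives $xw\Sig^\oo \cap L=\es$; if it is $\top$, the same property gives $xw\Sig^\oo \sse L$. Thus the entire task is to construct such a $w$.

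First I would fix conventions. I may assume $\bot$ and $\top$ are absorbing, $\bot\cdot a=\bot$ and $\top\cdot a=\top$ for all $a\in\Sig$: redirecting the outgoing edges of $\bot$ and $\top$ to themselves changes neither the state count nor the monitor-for-$L$ properties, because any longer path of the form ``reach $\bot$, then continue'' has a label $uv$ with $uv\Sig^\oo \sse u\Sig^\oo$, and $\bot$-labels already satisfy $u\Sig^\oo\cap L=\es$ (dually for $\top$), while reachability of $\os{\bot,\top}$ from every state is preserved. Write $P = Q\sm\os{\bot,\top}$ for the set of \emph{live} states; since $Q\cap\os{\bot,\top}\neq\es$ we have $k:=\abs P\le n-1$.

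The combinatorial core is the observation that from each live state $p$ there is a word $u_p$ with $\abs{u_p}\le k$ and $p\cdot u_p\in\os{\bot,\top}$. Indeed, the defining property of a monitor yields a path from $p$ into $\os{\bot,\top}$; a shortest such path passes through pairwise distinct live states and then makes one final step into a sink, so its length is at most $k$. I would then build $w$ greedily, using that $\bot,\top$ are absorbing. Starting from the empty word, maintain the set $S_w=\set{q\in Q}{q\cdot w\notin\os{\bot,\top}}$ of not-yet-absorbed states, so $S_1=P$. While $S_w\neq\es$, pick $p\in S_w$, note $p':=p\cdot w\in P$, and replace $w$ by $wu_{p'}$. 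Absorption guarantees that every state already sent into $\os{\bot,\top}$ stays there, so $S_{wu_{p'}}\sse S_w$, while $p$ is newly absorbed; hence $\abs{S_{wu_{p'}}}\le\abs{S_w}-1$ and the length of $w$ grows by at most $k$. After at most $k$ iterations $S_w=\es$, and the total length is at most $k\cdot k=k^2\le(n-1)^2$.

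I expect the only delicate points to be bookkeeping ones: justifying the per-step bound $\abs{u_p}\le k$ through the ``a shortest path visits distinct live states'' argument, and observing that the absorbing property of $\bot,\top$ is exactly what makes $S_w$ shrink monotonically, so greedily concatenating the words $u_{p'}$ never undoes earlier progress. This is the reset-word style argument underlying \cite[Lemma 2]{Staiger76a}.
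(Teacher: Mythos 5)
Your proposal is correct and follows essentially the same route as the paper's proof: normalize $\bot,\top$ into an absorbing sink, note that every state reaches the sink by a word of length at most $n-1$, and build the reset word by concatenation, using absorption so that earlier progress is preserved; your greedy formulation over the shrinking set $S_w$ is just a reorganization of the paper's induction on $k$. The only (welcome) extra care is your explicit justification that redirecting the outgoing edges of $\bot$ and $\top$ preserves the monitor properties, which the paper leaves implicit.
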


\begin{proof} We may assume that $n \geq 1$ and that the state space of $\cM$ is included in $\os{1,\ldots, n-1, \bot, \top}$. Merging $\top$ and $\bot$ into a single state 
$0$ we claim that there is a word $w$ of length at most $(n-1)^2$
such that $q\cdot w = 0$ for all $0\leq q \leq n-1$. 
Since $L$ is \moni, there is for each $q\in \os{0,\ldots, n-1}$ a finite word $v_q$ of length at most $n-1$ such that $q\cdot v_q = 0$.
By induction on $k$ we may assume that there is a word 
$w_k$ of length at most $k(n-1)$ such that for each 
$q\in \os{0,\ldots, k}$ we have $q\cdot w_k =0$. (Note that the assertion trivially holds for $k=0$.) If $k \geq  n-1$ we are done: $w = w_{n-1}$. Otherwise 
consider the state $q= k+1$ and the state $p = q\cdot w_k$. 
Define the word $w_{k+1}$ by $w_{k+1} = w_k v_p$. Then the length of $w_{k+1}$ is at most $(k+1)(n-1)$. Since $w_{k}$ is a prefix of $w_{k+1}$ and since $0\cdot v = v$ for all $v$, we have
$q\cdot  w_{k+1}= 0$ for all $0\leq q \leq k+1$. \qed
\end{proof}

\begin{remark}\label{rem:staig}
The interest in \prref{prop:staiger} is that monitorability can be characterized by a single alternation of quantifiers.  
Instead of saying that
$$\forall x\, \exists w\, (\forall \alp : xw \alp \in L) \vee (\forall \alp : xw \alp \notin L)$$
it is enough to say 
$$ \exists w\, \forall x\,(\forall \alp : xw \alp \in L) \vee (\forall \alp : xw \alp \notin L).$$
The length bound $(n-1)^2$ is not surprising. It confirms {\v C}ern{\'y}'s Conjecture in the case of monitors. (See \cite{Volkov08} for a survey on {\v C}ern{\'y}'s Conjecture.) Actually, in the case of monitors with more than $3$ states the estimation of the length of the ``reset word'' is not optimal. 
For example in the proof of \prref{prop:staiger} we can choose  the
word $v_1$ to be a letter, because there must be a state with distance at most one to $0$. The precise bound is ${{n+1} \choose 2} = (n+1)n/2$ if the alphabet is allowed to grow with $n$ \cite[Theorem 6.1]{Rystsov97}. If the alphabet is fixed, then the lower bound for the length of $w$ is still in $n^2/4+\Omega(n)$ \cite{Martugin08}.  
\end{remark}

%%%%%%%%%%%%%%%%%%%%%%%%%%%%%%%%%%%%%
\section{Monitorable deterministic languages}\label{sec:mdl}
The class of \moni languages form a Boolean algebra and every 
$\oo$-regular set $L$ can be written as a finite union 
$L= \bigcup_{i=1}^{n}L_i\sm K_i$ where the $L_i$ and $K_i$ are deterministic $\oo$-regular,  \cite{tho90handbook}.
Thus, if $L$ is not \moni, then one of the deterministic $L_i$ or $K_i$
is not \moni. This motivates to study \moni deterministic languages more closely.

\begin{definition}\label{def:accmoni}
Let $L\sse \Sig^\oo$ be deterministic $\oo$-regular. 
A \emph{deterministic B\"uchi monitor} (\emph{DBM} for short) for $L$ is a tuple
\begin{equation*}%\label{eq:initmoni}
\cB = (Q,\Sig,\del,q_0,F,\bot,\top)
\end{equation*}
where  $\cA= (Q,\Sig,\del,q_0,F)$ is a DBA with $L = L(\cA)$ and where $(Q,\Sig,\del,q_0,\bot,\top)$ is a monitor  in the sense of \prref{eq:initmoni}
for $L$.
\end{definition}
The next proposition justifies the definition. 
\begin{proposition}\label{prop:DBM}
Let $L\sse \Sig^\oo$ be any subset. Then $L$ is a \moni deterministic $\oo$-regular language \IFF there exists a DBM for $L$. 
\end{proposition}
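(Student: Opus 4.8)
The plan is to prove the two implications separately, with essentially all the work in the reverse direction. The direction ``$\Leftarrow$'' is immediate: if a DBM $\cB = (Q,\Sig,\del,q_0,F,\bot,\top)$ exists, then by definition $(Q,\Sig,\del,q_0,F)$ is a DBA with accepted language $L$, so $L$ is deterministic $\oo$-regular, and $(Q,\Sig,\del,q_0,\bot,\top)$ is a monitor for $L$, so $L$ is monitorable. For ``$\Rightarrow$'' I would start from an arbitrary DBA $\cA = (Q,\Sig,\del,q_0,F)$ with $L(\cA) = L$, which exists since $L$ is deterministic $\oo$-regular, and upgrade $\cA$ itself into a monitor while keeping its transition structure, so that a single device serves both roles. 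The key observation is that in a deterministic automaton the residual depends only on the current state: writing $L_q$ for the language accepted by $\cA$ started in $q$, one has $L(u) = L_{q_0\cdot u}$ whenever $q_0\cdot u$ is defined, because a finite prefix cannot affect the ``infinitely often'' condition on the suffix. This reads the two decisive cases off the states directly.

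Next I would single out the \emph{$\bot$-states}, those $q$ with $L_q = \es$, and the \emph{$\top$-states}, those with $L_q = \Sig^\oo$. A one-line calculation shows that both families are closed under $\del$: if $L_q = \es$ then $L_{q\cdot a} = \set{\bet}{a\bet \in L_q} = \es$, and likewise $L_{q\cdot a} = \Sig^\oo$ when $L_q = \Sig^\oo$. I then build $\cB$ by merging all $\bot$-states into one non-final sink $\bot$ carrying self-loops, all $\top$-states into one final sink $\top$ carrying self-loops, routing every previously undefined transition to $\bot$, and keeping the remaining reachable states $R$ with their original finality, setting $F' = (F \cap R)\cup\os\top$. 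The initial state is $q_0$, or $\bot$, or $\top$, according to the class of $q_0$; this subsumes the degenerate cases $L = \es$ and $L = \Sig^\oo$.

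The first claim to verify is $L(\cB) = L$. A run of $\cA$ on $\alp$ either stays forever among the states of $R$, in which case the run in $\cB$ coincides with it and acceptance is unchanged; or it first enters a $\top$-state after reading some prefix $u$, in which case $u\Sig^\oo \sse L$ and in $\cB$ the run loops in the final sink $\top$ and is accepted; or it first enters a $\bot$-state or takes a formerly undefined transition after $u$, in which case $L(u) = \es$ and the run loops in the non-final sink $\bot$ and is rejected. The delicate point is that merging $\bot$-states cannot delete an accepted word: any word whose run passes through a $\bot$-state at a finite position would have its suffix in $L_q = \es$, hence was never accepted.

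Finally I would check that $(R\cup\os{\bot,\top},\Sig,\del,q_0,\bot,\top)$ is a monitor for $L$. The two labelling conditions follow from the residual identity: a path from $q_0$ to $\bot$ labelled $u$ forces $L(u) = \es$, i.e. $u\Sig^\oo\cap L = \es$, and a path to $\top$ labelled $u$ forces $L(u) = \Sig^\oo$, i.e. $u\Sig^\oo \sse L$. The remaining axiom — that every state can reach $\bot$ or $\top$ — is exactly where monitorability of $L$ enters: for a reachable state $p = q_0\cdot u$, \prref{eq:neccon} supplies a word $w$ with $uw\Sig^\oo \sse L$ or $uw\Sig^\oo\cap L = \es$, and then $p\cdot w = q_0\cdot uw$ equals $\top$ or $\bot$, respectively. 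I expect the main obstacle to be purely bookkeeping in the language-preservation step, namely the closure of the two state families and the argument that merging neither loses nor creates accepted runs; once the identity $L(u) = L_{q_0\cdot u}$ is established, the monitor conditions follow almost mechanically.
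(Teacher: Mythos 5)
Your proposal is correct and follows essentially the same route as the paper's proof: take any DBA for $L$, observe that the states with residual $\es$ (resp.\ $\Sig^\oo$) are closed under the transition function, merge each family into a sink $\bot$ (resp.\ $\top$) without changing the accepted language, and use monitorability of $L$ to conclude that every reachable state can reach $\bot$ or $\top$. Your write-up merely spells out in more detail the language-preservation and residual-identity steps that the paper leaves implicit.
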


\begin{proof}
The direction from right to left is trivial. Thus, let $L$ be \moni and let 
$L = L(\cA)$ for some DBA
$\cA= (Q,\Sig,\del,q_0,F)$ where all states are reachable from the initial state $q_0$.
 For a state $p\in Q$ let $L(p) = L(Q,\Sig,\del,p,F)$. 
 If $L(p) = \es$, then $L(p\cdot a) = \es$; and if $L(p) = \Sig^\oo$, then $L(p\cdot a) = \Sig^\oo$. Thus, we can merge all states $p$ with 
 $L(p) = \es$ into a single non-final state $\bot$; and we can merge all all states $p$ with 
 $L(p) = \Sig^\oo$ into a single final state $\top$ without changing the accepted language. All states are of the form $q_{0} \cdot x$ for some 
 $x\in \Sig^*$; and, 
 since $L$ is \moni, for each $x$ either there is  some $y$ with 
 $xy\Sig^\oo \cap L = \es$ or  there is  some $y$ with 
 $xy\Sig^\oo \sse L$ (or both). In the former case we have 
 $q_{0} \cdot xy = \bot$ and in the latter case we have 
 $q_{0} \cdot xy = \top$. \qed
 \end{proof}

\begin{corollary}\label{cor:states}
Let $L\sse \Sig^\oo$ be a \moni deterministic $\oo$-regular language 
and $\cA$ be a DBA with $n$ states accepting $L$.
Let $\cB$ be a DBM for $L$ with state set $Q_\cB$ where 
the size of $Q_\cB$ is as small as possible. Let further 
$Q_\cR$ (resp.~$Q_\cM$) be the state set of the congruential (resp.~smallest standard)
monitor for $L$. 
Then we have 
$$n \geq \abs{Q_\cB} \geq \abs{Q_\cR} \geq\abs{Q_\cM}.$$
\end{corollary}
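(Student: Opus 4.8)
The plan is to establish the three inequalities separately, each read off from one of the earlier results, with the common convention that in every monitor the distinguished state $\bot$ is the quotient $\es$ and $\top$ is the quotient $\Sig^\oo$. For $n \geq \abs{Q_\cB}$ I would run the construction from the proof of \prref{prop:DBM} on the given DBA $\cA$: after discarding the states not reachable from $q_0$ and then merging every state $p$ with $L(p)=\es$ into a single state $\bot$ and every state with $L(p)=\Sig^\oo$ into a single state $\top$, one obtains a DBM for $L$ whose state set is the image of that of $\cA$ under the merging map. Merging never increases the number of states, so this DBM has at most $n$ states, and since $\cB$ is chosen to be a smallest DBM we get $\abs{Q_\cB}\leq n$.

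For $\abs{Q_\cB}\geq\abs{Q_\cR}$ the key point is that in any DBM $\cB=(Q,\Sig,\del,q_0,F,\bot,\top)$ the underlying automaton $(Q,\Sig,\del,q_0,F)$ is a deterministic B\"uchi automaton accepting $L$, so the state reached after reading $x$ determines the quotient: $L(x)=L(q_0\cdot x)$ for all $x\in\Sig^*$. Hence $q_0\cdot x=q_0\cdot y$ forces $L(x)=L(y)$, i.e.\ $x\sim_L y$, and the assignment $q_0\cdot x\mapsto L(x)$ is a well-defined surjection from the (reachable) states of $\cB$ onto $Q_L=Q_\cR$. This surjection sends the DBM's $\bot$ to $\es$ and its $\top$ to $\Sig^\oo$, so the distinguished states match up, and therefore $\abs{Q_\cB}\geq\abs{Q_L}=\abs{Q_\cR}$.

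For $\abs{Q_\cR}\geq\abs{Q_\cM}$ I would invoke \prref{prop:epi}, which supplies a canonical epimorphism $\phi$ from the right-congruential monitor $\cA_L$ onto some standard monitor $\cM_L$. Since $\phi$ is surjective, $\abs{Q_\cR}=\abs{Q_{\cA_L}}\geq\abs{Q_{\cM_L}}$; and because $Q_\cM$ is by definition the state set of a smallest standard monitor for $L$, we have $\abs{Q_{\cM_L}}\geq\abs{Q_\cM}$. Chaining these gives $\abs{Q_\cR}\geq\abs{Q_\cM}$, and combining all three steps yields $n\geq\abs{Q_\cB}\geq\abs{Q_\cR}\geq\abs{Q_\cM}$.

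The only delicate point, rather than a genuine obstacle, is to keep the treatment of $\bot$ and $\top$ uniform across the three monitors: once we fix that in each of them $\bot=\es$ and $\top=\Sig^\oo$, every comparison map (the merging map onto the DBM, the state-to-quotient map onto $\cA_L$, and the epimorphism $\phi$ onto $\cM_L$) automatically respects these two states, so no state is accidentally double-counted or dropped. With that convention in place each inequality is immediate, since all the substantive work has already been carried out in \prref{prop:DBM}, \prref{prop:moniquot}, and \prref{prop:epi}.
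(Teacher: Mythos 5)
Your proposal is correct and follows exactly the route the paper intends: the corollary is stated there without a separate proof, as an immediate consequence of the merging construction in \prref{prop:DBM} (which turns the $n$-state DBA $\cA$ into a DBM with at most $n$ states), the observation that in a DBA the reachable state after $x$ determines the quotient $L(x)$ (so the states of any DBM surject onto $Q_L$), and the epimorphism of \prref{prop:epi} together with the minimality of $Q_\cM$. Your explicit bookkeeping of the convention $\bot=\es$ and $\top=\Sig^\oo$ agrees with the identification made in \prref{prop:moniquot}, so nothing essential is missing.
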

%%%%%%%%%%%%%
%

%\begin{proof}
%
%\end{proof}
%

\begin{example}\label{ex:anb}
Let $\Sig = \os{a,b}$ and $\Gam = \os{a,b,c,d}$.
\begin{enumerate}
\item For $n \in \N$ consider 
$L = a^nb \Sig^\oo \sm \Sig^*bb\Sig^\oo$. It is a safety property. Hence, we have 
$\ov L = L$. Moreover, $\Sig^*bb\Sig^\oo$
is a liveness property (i.e., dense). Hence $\ov {L^{\compl}} = \Sig^\oo$.
It follows that the standard monitor is just the minimal DBA for 
$L$ augmented by the state $\bot$. There are exactly $n+4$ right-congruence
classes defined by prefixes of the words $a^nba$ and  $a^nb^2$. 
We have $L(a^nb^2) = \es$. Hence reading $a^nb^2$ leads to the state $\bot$. 
This, shows that the inequalities in \prref{cor:states} become equalities 
in that example. On the other hand $b^2$ is a forbidden factor for $L$. Hence there is a $3$ state monitor for $L$. Still there is no \epi from 
the standard monitor onto that monitor, since in the standard monitor we have
 $L(a^{n+1}) = \es$ but in the $3$-state monitor $\bot$ has not an incoming  arc labeled by 
 $a$.
\item Every monitor for the language $\Sig^*(bab\cup b^3)\Sig^\oo$ has at least  $4$ states. There are three monitors with $4$ states which are pairwise non-isomorphic. 
\item Let $L = (b^*a)^\oo \cup \os{a,b}^* c \os{a,b,c}^\oo \sse \Gam^\oo$.
Then $L$ is \moni and deterministic, but not codeterministic. 
Its minimal DBM has $4$ states, but the congruential monitor 
${Q_\cR}$ has $3$ states, only. 
We have $\ov{L} = \os{a,b,c}^\oo$ and $\ov{L^{\compl}} = \Gam^\oo$.
Hence, the smallest standard monitor has two states. 
In particular, we have  $\abs{Q_\cB} >\abs{Q_\cR} >\abs{Q_\cM}$, see also 
\prref{fig:monis}.
\end{enumerate}
\end{example}
\begin{figure}[h]
  \centering
 \begin{tikzpicture}[node distance=3cm,auto,>= triangle 45,bend angle
   =45,initial text=]
  \node[state,initial] (0) {$0$};
\node (B) [left of =0] {$\cal B:$};
\node[state,accepting] (1) [right of =0] {$1$};
\node[state,accepting] (2) [right of =1] {$2$};
\node[state] (3) [below of =1] {$\bot$};
\path[->] (1) edge [bend left] node [above] {$b$} (0)
(0) edge node [above] {$a$} (1)
(0) edge [loop above] node {$b$} ()
(1) edge [loop above] node [left] {$a$} ()
(0) edge [bend angle =45,bend left] node [above left] {$c$ \phantom{qqq}} (2)
(1) edge  node [above] {$c$} (2)
(2) edge [loop above] node {$a,b,c$} ()
(0) edge [bend angle =20, bend right] node [above] {$d$} (3)
(1) edge node [above right] {$d$} (3)
(2) edge  [bend angle =20, bend left] node  {$d$} (3)
(3) edge [loop below] node {$\Gamma$} ();
\end{tikzpicture} 
\qquad \begin{tikzpicture}[node distance=3cm,auto,>= triangle 45,bend angle =
  30, initial text=]
  \node[state,initial] (0) {$0,1$};
\node (R) [left of =0] {$\cal R:$};
\node[state] (2) [right of =0] {$2$};
\node[state] (3) [below of =1] {$\bot$};
\path[->] 
(0) edge node [above] {$c$} (2)
(0) edge [loop above] node {$a,b$} ()
(2) edge [loop above] node {$a,b,c$} ()
(0) edge node [above] {$d$} (3)
(2) edge node  {$d$} (3)
(3) edge [loop below] node {$\Gamma$} ();
\end{tikzpicture}
\hspace{1cm}\begin{tikzpicture}[node distance=3cm,auto,>= triangle 45,bend angle =
  30, initial text=]
  \node[state,initial] (0) {$0,1,2$};
\node (R) [left of =0] {$\cal M:$};
\node[state] (3) [below of =0] {$\bot$};
\path[->] 
(0) edge [loop above] node {$a,b,c$} ()
(0) edge node [left] {$d$} (3)
(3) edge [loop below] node {$\Gamma$} ();
\end{tikzpicture}
  \caption{Monitors ${\cal B}$, ${\cal R}$, ${\cal M}$ for $L=L({\cal
      B})$.}
  \label{fig:monis}
\end{figure}

\section{Deciding liveness and monitorability}\label{sec:dec}
\subsection{Decidability for B\"uchi automata}\label{sec:Bdlm}
It is well-known that decidability of liveness (monitorability resp.) is PSPACE-complete for B\"uchi automata. The following result for liveness is classic, for monitorability
it was shown in \cite{DiekertM12_WoLLIC}.

\begin{proposition}\label{prop:lowb}
The following two problems are $\PSPACE$-complete:
\begin{itemize}
\item {\bf Input:} A B\"uchi automaton $\cA = (Q, \Sigma, \del, I, F)$.
 \item {\bf Question 1:} Is the accepted language $L(\cA) \sse \Sig^\oo$ live?
 \item {\bf Question 2:} Is the accepted language $L(\cA) \sse \Sig^\oo$ \moni?
\end{itemize}
\end{proposition}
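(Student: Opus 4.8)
The plan is to prove membership in \PSPACE and \PSPACE-hardness separately, running liveness and monitorability in parallel. As common preprocessing I would compute, in polynomial time, the set $G = \set{q\in Q}{L(q)\neq\es}$ of \emph{good} states, i.e. those from which some accepting run starts (equivalently, those that can reach a final state lying on a cycle, via an SCC decomposition). The role of $G$ is that a finite word $w$ is a prefix of some word in $L(\cA)$ \IFF the subset $\del(I,w)$ meets $G$, and that $w\Sig^\oo\sse L(\cA)$ holds \IFF the subset $\del(I,w)$ is \emph{universal}, meaning $\bigcup_{q\in\del(I,w)}L(q)=\Sig^\oo$.

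For liveness, recall that $L(\cA)$ is dense \IFF $\ov{L(\cA)}=\Sig^\oo$ \IFF every finite word is a live prefix, i.e. for all $w$ the subset $\del(I,w)$ meets $G$. Hence $L(\cA)$ is \emph{not} dense \IFF the subset automaton can reach a $G$-free subset. This reachability I would decide by a nondeterministic procedure that guesses $w$ letter by letter while storing only the current subset $\del(I,w)$, which costs $|Q|$ bits; it runs in polynomial space, so non-density lies in $\mathrm{NPSPACE}=\PSPACE$ and density in \PSPACE.

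For monitorability I would use the characterization ``Monitorable $=$ boundary nowhere dense'' \cite{DiekertL2014tcs}: since $\ov L\cap\ov{L^{\compl}}$ is closed, $L$ is \moni \IFF there is no $x$ with $x\Sig^\oo\sse\ov L\cap\ov{L^{\compl}}$. I'd translate the two inclusions into conditions on $S_x=\del(I,x)$: writing $S\cdot w=\del(S,w)$, one has $x\Sig^\oo\sse\ov L$ \IFF from $S_x$ no word reaches a $G$-free subset (call $S_x$ \emph{live}), and $x\Sig^\oo\sse\ov{L^{\compl}}$ \IFF from $S_x$ no word reaches a universal subset (call $S_x$ \emph{co-live}); both equivalences use the fact that a closed set which meets every basic open subset of $x\Sig^\oo$ already contains $x\Sig^\oo$. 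Thus $L$ is non-monitorable \IFF some reachable subset is simultaneously live and co-live. I would decide this by guessing $x$, tracking only $S_x$, and then checking liveness and co-liveness; each check is a co-reachability test in the subset space, the co-live one invoking a \PSPACE\ universality subroutine for subsets. The whole computation stays in polynomial space. This monitorability upper bound is the delicate step: the naive route through an explicit automaton for $\ov{L^{\compl}}$ complements $\cA$ and then determinizes, giving a double-exponential object, so the main obstacle is to keep everything in \PSPACE\ by testing universality of subsets on the fly rather than materializing the complement.

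For the lower bounds I would reduce from NFA universality, which is \PSPACE-complete. Fix an NFA $\cM=(Q,\Sig,\del,I,F)$ and a fresh letter $\#$, and set $\Sig'=\Sig\cup\os{\#}$. For liveness (the classical case) build the BA $\cB$ with $L(\cB)=L(\cM)\#\Sig'^\oo$ (simulate $\cM$, and on reading $\#$ from a final state jump to an accepting sink): then $w\#$ is a dead prefix exactly when $w\notin L(\cM)$, so $L(\cB)$ is dense \IFF $\cM$ is universal. For monitorability, guard a non-monitorable core behind the same test. Let $L_{\mathrm{inf}}\sse\Sig'^\oo$ be ``infinitely many $a$'s'' and build the BA $\cA$ with $L(\cA)=L(\cM)\#\Sig'^\oo\cup\Sig^*\#L_{\mathrm{inf}}$, realized by a free $\Sig^*$-branch that on $\#$ enters a two-state ``infinitely many $a$'s'' gadget. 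If $\cM$ is universal then $L(\cA)=\Sig^*\#\Sig'^\oo$ is open, hence \moni; if some $w\notin L(\cM)$ exists then below the prefix $w\#$ only the $L_{\mathrm{inf}}$-branch survives, so both $L(\cA)$ and its complement are dense there, giving $w\#\Sig'^\oo\sse\ov{L(\cA)}\cap\ov{L(\cA)^{\compl}}$ and thus non-monitorability. Hence $L(\cA)$ is \moni \IFF $\cM$ is universal. The key subtlety on this side is that the reduction must produce a polynomial-size \emph{nondeterministic} BA whose (non)monitorability encodes universality without complementing $\cM$, which the $L_{\mathrm{inf}}$-guard achieves, matching the necessary condition \refeq{eq:neccon}.
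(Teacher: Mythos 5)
Your proposal is correct and follows essentially the same route as the paper: \PSPACE membership by tracking reachable subsets and testing emptiness/universality of the induced state sets (the paper checks condition (\ref{eq:neccon}) positively for every reachable subset, you check its negation, which is the same argument), and \PSPACE-hardness by reduction from NFA universality with a ``dead'' branch for liveness and an ``infinitely many occurrences of a fixed letter'' trap for monitorability, which is exactly the role of the $d,e$-gadget with $F_2=\os{d,f}$ in the paper. The only cosmetic difference is that the paper packages both reductions into one automaton with two choices of final states, thereby also recording that liveness is hard even restricted to monitorable languages and monitorability is hard even restricted to live ones.
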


\begin{proof}
Both problems can be checked in PSPACE using standard techniques. 
We sketch this part for monitorability.
The procedure considers, one after another, all subsets 
$P$ such that $P$ is reachable from $I$ by reading some input word. 
For each such $P$ the procedure guesses some $P'$ which is reachable from $P$.
It checks that either $L(\cA')=\es$ or $L(\cA')=\Sig^\oo$, where
$\cA'=(Q, \Sigma, \del,P', F)$. If both tests fail then the procedure 
enters a rejecting loop. 

If, on the other hand, the procedure terminates after having visited all 
$P$, then $L(\cA)$ is \moni. 

For convenience of the reader we  show \PSPACE-hardness of both problems by adapting the proof in \cite{DiekertM12_WoLLIC}.

We reduce the universality problem for non-deterministic finite
  automata (NFA) to both problems. The universality problem for NFA
  is well-known to be \PSPACE-complete.
 
Start with an NFA
$\Aa = (Q', \Gam, \del', q_0, F')$ where $\Gam \neq \es$. We use a new letter $b\notin \Gam$ and we let $\Sig = \Gam \cup \os{b}$.

We will construct B\"uchi automata  $\Bb_1$ and $\Bb_2$ as follows. 
We use three new states $d,e,f$ and we let $Q = Q'\cup \os{d,e,f}$, see 
\prref{fig:tikzfBA}.
%The repeated (or final) states of $B$ are defined as 
%$F= \os{e,f}$. 
The initial state is the same as before: $q_0$. 
Next, we define $\del$. We keep all arcs from $\del'$ and we add the 
following new arcs. 
\begin{itemize}
\item $ q \act{b} d \act{a} e \act{a} e$ for all $q \in Q' \sm F'$ and
  all $a \in \Gam$.  
 \item $ e \act{b} d \act{b} d$
 \item $ q \act{b} f \act{c} f $ for all $q \in F'$ and 
 all $c \in \Sig$. 
 \end{itemize}

%
%\begin{figure}[h]
%  \centering
%\begin{tikzpicture}[initial text=, auto, shorten >=1pt, >=latex]
%  % Split ellipse 3cm x 1cm
%  \path[draw] (0, 0) arc [start angle=40, end angle=230, x radius=3cm, y radius=1cm] coordinate[pos=0.9] (non-final-anchor) -- cycle node[pos=0.5] 
%  {$Q' \setminus F'$};
%  \path[draw, double] (0.25cm, -0.125cm) arc [start angle=40, end angle=-130, x radius=3cm, y radius=1cm] coordinate[pos=0.4] (final-anchor) -- cycle node[pos=0.5, 
%  below right] {$F'$};
%
%  % Other nodes
%  \node[state, below left=of non-final-anchor] (d) {$d$};
%  \node[state, right=of d] (e) {$e$};
%  \node[state, below right=of final-anchor] (f) {$f$};
%  
%  \path[->] (non-final-anchor) edge node[above] {$b$} (d)
%            (d) edge[loop left] node {$b$} (d)
%            (d) edge node {$\Gamma$} (e)
%            (e) edge[bend left] node {$b$} (d)
%            (e) edge[loop right] node {$\Gamma$} (e)
%            ;
%  
%  \path[->] (final-anchor) edge node {$b$} (f)
%            (f) edge[loop right] node {$\Sigma$} (f)
%            ;
%\end{tikzpicture}
%  \caption{PSPACE-hardness for liveness and monitorability for B\"uchi
%    automata.}
%\label{fig:tikzfBA}
%\end{figure}
%

\begin{figure}[h]
  \centering
  \includegraphics[scale=1.1]{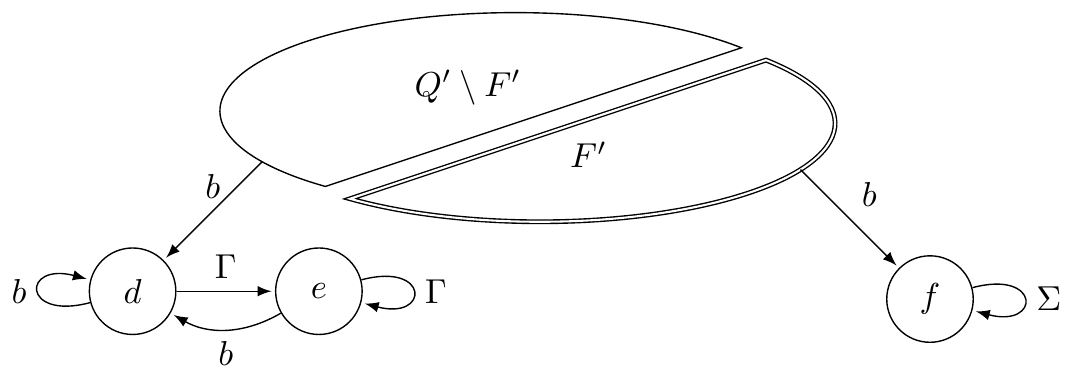}
  \caption{PSPACE-hardness for liveness and monitorability for B\"uchi
    automata.}
\label{f-BA}
\end{figure}
Let us  define two final sets of states: $F_1 = \os f$ and $F_2= \os{d,f}$. 
Thus, we have constructed  B\"uchi automata  $\Bb_1$ and $\Bb_2$ 
where 
\[\cB_i = (Q, \Gam, \del, q_0, F_i)\, \text{ for } \,i = 1,2.\]
For the  proof of the proposition it is enough to verify the 
following two claims which are actually more precise than needed. 
\begin{enumerate}
\item The language $L(\cB_1)$ is \moni. It is live \IFF $L(\cA) = \Gam^*$.
\item The language $L(\cB_2)$ is live. It is \moni \IFF $L(\cA) = \Gam^*$.
\end{enumerate}
% In order to understand the construction, consider any finite word 
% $w\in \Sig^*$. It can be written as $uu'$ where $u\in
%  \Gam^*$ is the maximal prefix without any occurrence of $b$. Possibly 
%  $u=w$. 
 
 If $L(\cA) = \Gam^*$, then we have $L(\cB_1)=L(\cB_2)=\Sig^\oo$, so
% $$w b^\oo \in wb\Sig^\oo \sse  L(\cB_1) \sse  L(\cB_2).$$ 
 both languages are live and \moni. 
 
If $L(\cA) \neq  \Gam^*$, then there exists some word $u\notin L(\cA)$ and
hence reading $ub$ we are necessarily in state $d$. It follows that 
$ub\Sig^\oo \cap L(\cB_1) = \es$ and $L(\cB_1)$ is not live.
Still, $L(\cB_1)$ is \moni.
Now, for all $w \in \Sig^*$ we have $w b^\oo \in L(\cB_2)$. Hence, $L(\cB_2)$ is live.
However, if $u\notin L(\cA)$, then after reading $ub$ we are in state $d$.
Now, choose some letter $a \in \Gam$. 
For all $v\in \Sig^*$ we have 
$ubva^\oo \notin L(\cB_2)$, but $ubvb^\oo \in L(\cB_2)$. Hence, 
if $L(\cA) \neq  \Gam^*$, then $L(\cB_2)$ is not \moni. \qed
\end{proof}

\subsection{Decidability for \LTL}\label{sec:Bdlm}
We use the standard syntax and semantics of the linear temporal logic $\LTL$ 
for infinite words over some finite nonempty alphabet $\Sig$. We restrict ourselves the pure future fragment and 
the syntax of   $\LTL_\Sigma[\XU]$ is given as follows. 
$$
\varphi ::= \True \mid a \mid \neg\varphi \mid \varphi\vee\varphi 
              \mid \varphi\XU\varphi,
$$
where $a$ ranges over $\Sigma$. 
The binary operator $\XU$ is called
the \emph{next-until} modality.

In order to give the semantics we identify each 
$\phi \in \LTLS$ with some first-order formula $\phi(x)$ in at most 
one free variable. 
The identification is done as usual by structural induction. 
The formula $a$ becomes $a(x)= P_a(x)$, where $P_a(x)$ is the unary predicate 
saying that the label of position $x$ is the letter $a$. 
The formula ``$\phi$ \emph{neXt-Until} $\psi$'' is defined by:
\begin{align*}
  (\varphi\XU\psi)(x) &= \exists z:\; (x < z \wedge 
  \psi(z) \wedge \forall y:\;  \phi(y) \vee y \leq x \vee z \leq y).
\end{align*}

Finally let $\alp \in \Sig^\oo$ be an infinite word with the first position 
$0$, then we define $\alp \models \phi$ by $\alp \models \phi(0)$; and we define
$$L(\phi)= \set{\alp \in \Sig^\oo}{\alp \models \phi}.$$
Languages of type $L(\phi)$ are called \emph{$\LTL$ definable},
It is clear that every $\LTL$ definable language is first-order definable; and 
Kamp's famous theorem \cite{kam68} states the converse. In particular, 
given $L(\phi)$ there exists a BA $\cA$ such that $L(\phi)= L(\cA)$.
There are examples where the size of the formula $\phi$ is exponentially smaller than the size of any corresponding BA $\cA$.

For a survey on first-order definable languages we refer to 
\cite{dg08SIWT}.
By $\LTL$ decidability of a property $\cP$ we mean that the input is a 
formula $\phi\in \LTLS$ and we ask whether property $\cP$ holds for 
$L(\phi)$. By \prref{prop:lowb} we obtain straightforwardly the following lower and upper bounds for the $\LTL$ decidability of monitorability 
and liveness. 
\begin{remark}\label{rem:lowup}
The following two problems are $\PSPACE$-hard and can be solved in EXPSPACE:
\begin{itemize}
\item {\bf Input:} A formula $\phi\in \LTLS$.
 \item {\bf Question 1:} Is the accepted language $L(\phi) \sse \Sig^\oo$ live?
 \item {\bf Question 2:} Is the accepted language $L(\phi) \sse \Sig^\oo$ \moni?
\end{itemize}
\end{remark}
\prref{rem:lowup} is far from satisfactory since there is huge gap between $\PSPACE$-hardness and containment in EXPSPACE. 
Very unfortunately, we were not able to make the gap any smaller for 
 monitorability. There was some belief in the literature that, at least, $\LTL$ liveness can be tested in $\PSPACE$, see 
for example \cite{NitscheW97}. But surprisingly this last assertion is wrong: 
 testing $\LTL$ liveness is 
EXPSPACE-complete! 
\begin{proposition}\label{prop:livees}
Deciding $\LTL$ liveness is EXPSPACE-complete:
\begin{itemize}
\item {\bf Input:} A formula $\phi\in \LTLS$.
\item {\bf Question} Is the accepted language $L(\phi) \sse \Sig^\oo$ live?
\end{itemize}
\end{proposition}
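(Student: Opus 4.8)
The containment in \EXPSPACE\ is already recorded in \refrem{rem:lowup}: translate $\phi$ into a B\"uchi automaton $\cA$ of exponential size and then test liveness of $L(\cA)$ in space polynomial in $\abs{\cA}$ by \refprop{prop:lowb}. So the only thing to prove is \EXPSPACE-hardness, and the plan is to reduce from the acceptance problem of Turing machines running in space $2^n$, which is \EXPSPACE-complete. Since deterministic space is closed under complement, I may equally well reduce from \emph{non}-acceptance, and I will arrange matters so that $L(\phi)$ is live if and only if $M$ rejects its input.

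The semantic hook is the ``bad prefix'' reformulation of liveness: $L$ fails to be live exactly when $\ov L \neq \Sig^\oo$, i.e.\ when there is a finite word $x$ with $x\Sig^\oo \cap L = \es$. The goal is therefore to build, in polynomial time from $M$ and an input $w$ with $\abs w = n$, a formula $\phi \in \LTLS$ whose bad prefixes are precisely the encodings of accepting computations of $M$ on $w$; then $L(\phi)$ is not live iff such a computation exists iff $M$ accepts $w$. I would encode a computation as a sequence of configurations separated by a marker, each configuration being a block of $2^n$ cells, and I would tag every cell with its own $n$-bit address realised by a binary counter that increments from $0$ to $2^n-1$ inside each block. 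The ``local'' parts of the specification are routine to write as a polynomial future formula: that the counter increments correctly (bit $i$ flips iff all lower bits are set), that the first block encodes the initial configuration $w$ padded with blanks, and that an accepting state eventually occurs.

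The point that makes the bound come out at \EXPSPACE\ rather than \PSPACE\ is that I must not---and cannot---let $\phi$ itself verify an entire computation: a single polynomial future formula has satisfiability only in \PSPACE, so it cannot certify a length-$2^{2^n}$ object. The extra exponential has to be extracted from the liveness test itself, i.e.\ from the implicit alternation $\forall x\,\exists\,\text{continuation}$ (equivalently, from the subset construction needed to compute $\ov{L(\phi)}$ on the exponential automaton for $\phi$). Concretely, the formula stays a local, polynomial ``rule checker'', and one engineers the transitions so that a prefix loses all legal continuations exactly when the rules, propagated cell by cell along the word, have already pinned it down to a complete, consistent, accepting computation.

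The hard part will be the step relation between consecutive configurations, because the cell at address $a$ in one block and the cell at address $a$ in the next block sit $2^n+1$ positions apart, a distance that a pure-future polynomial formula cannot span with $\X$'s and must not span at all (otherwise \LTL\ satisfiability would already be \EXPSPACE-hard). The device is to use the $n$-bit addresses to identify corresponding cells and to let the universal quantification over continuations (resp.\ the freedom in choosing the bad prefix) range over the exponentially many individual consistency constraints, each of which is checkable locally once the relevant addresses have been read. Getting this interaction exactly right---so that ``$x$ has no legal continuation'' coincides with ``$x$ encodes a genuine, flaw-free accepting run'', robustly over all continuations, and with a formula that only ever compares cells through their addresses and never tries to jump an exponential distance---is the technical core of the reduction; everything else (initial and acceptance conditions, counter bookkeeping, and the final complexity accounting) is bookkeeping.
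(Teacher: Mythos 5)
Your architecture matches the paper's: membership in \EXPSPACE via \refrem{rem:lowup}, hardness by a generic reduction from an \EXPSPACE machine $M$ in which the bad prefixes of $L(\phi)$ are exactly the well-formed encodings $c_0\cdots c_\ell\#$ of accepting computations, with each tape cell tagged by an $N$-bit address and all genuinely local conditions (counter increments, initial configuration, occurrence of $q_f$) handled by a polynomial future formula. You also correctly diagnose that the extra exponential must come from the $\forall$prefix$\,\exists$continuation alternation in the definition of liveness, not from the formula itself.

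But at the decisive step you only assert that ``the universal quantification over continuations ranges over the exponentially many consistency constraints, each checkable locally once the relevant addresses have been read,'' and you defer the realization of this to ``the technical core.'' That core is precisely what is missing, and it does not follow from what you wrote: a polynomial pure-future formula cannot, on its own, compare the $N$-bit addresses of two cells separated by a gap of length $2^N\cdot N$ --- if it could, \LTL satisfiability would already be \EXPSPACE-hard. You need a concrete channel by which a continuation \emph{names} one position $i$ and by which the formula can test both relevant cells against that single name. The paper's device: introduce fresh letters $k_1,\ldots,k_N$ and decree that a word with a well-formed prefix lies in $L$ iff the set of $k_p$'s occurring infinitely often in it equals $I(i)=\set{k_p}{\text{bit } p \text{ of } \bin(i) \text{ is }1}$ for some position $i$ at which consecutive stamps violate $\Del$. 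Each address comparison then becomes a polynomial conjunction of equivalences ``bit $p$ of the current cell's tag is $1$ iff $\G\F k_p$,'' so the two far-apart cells are matched against the same external reference rather than against each other; a prefix has no continuation in $L$ exactly when no position is reportable, i.e.\ the computation is flaw-free and accepting. (One must also check $I(i)\neq\es$, handled in the paper by padding so that mistakes never occur at address $0$.) Without this mechanism --- or an equivalent one --- your reduction does not close, so as it stands the proof has a genuine gap exactly where you flagged the difficulty.
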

EXPSPACE-completeness of liveness was proved by Muscholl and
Walukiewicz in 2012, but never published. Independently, it was
 proved by Orna Kupferman and 
Gal Vardi in \cite{kv15CSL}.
 
We give a proof of \prref{prop:livees} in Sections~ \ref{sec:eexps} and~\ref{sec:propproof}
below. We also point out why the proof technique fails to say anything about the hardness to decide monitorability.
Our proof for \prref{prop:livees}
 is generic. This means
that we start with a Turing machine $M$
which accepts a language $L(M)\sse \Gam^*$ in EXPSPACE. We show that 
we can construct in polynomial time a formula $\phi(w)\in \LTLS$ such that
\[w \in L(M) \iff L(\phi(w)) \sse \Sig^\oo \text{ is not live}.\]

\subsection{Encoding EXPSPACE computations}\label{sec:eexps}
For the definition of Turing machines we use standard conventions, very closely to the notation e.g.~in \cite{HU}. Let
$L=L(M)$ be accepted by a deterministic Turing machine $M$, % = (Q,\Sig,\Gam,\del,q_0,\os{q_f},B)$
where $M$
has set of states $Q$ and the tape alphabet is $\Gam$ containing a ``blank'' symbol $B$.
We assume that for some fixed polynomial $p(n)\geq n+2$ the machine $M$ uses on an input word $w\in (\Gam\sm \os B)^*$
of length $n$ strictly less space than $2^N-2$, where $N=p(n)$. (It does not really matter that $M$ is deterministic.)  Configurations are words from
$\Gam^* (Q \times \Gamma) \Gam^*$ of length precisely $2^{N}$, where
the head position corresponds to the symbol from $Q \times
\Gamma$. For technical reasons  we will assume that the first and the
last symbol in each configuration is $B$.  Let $A=\Gamma \cup (Q \times \Gamma)$.

 If the input is nonempty word $w=a_1\cdots a_{n}$ where the $a_i$ are letters, then 
 the initial configuration is defined here as
\[
C_0 = B(q_0,a_1)a_2 \cdots a_n\underbrace{BBBBB \cdots B}_{2^{N}-n-1
  \text{ times}}.
\]

For $t\geq 0$ let $C_t$ be configuration of $M$ at time $t$ during the
computation starting with the initial configuration $C_0$ on input $w$. We may
assume that the computation is successful \IFF there is some $t$ such
that a special symbol, say $q_f$, appears in $C_t$.  Thus, we can write each $C_t$ as a word
$C_t = a_{0,t} \cdots a_{m,t}$ with $m = 2^N-1$; and we have $w \in
L(M)$ \IFF there are some $i\geq 1$ and $t\geq 1$ such that   
$a_{i,t} = q_f$.

In order to check that a sequence $C_0$, $C_1$, \ldots  is a valid
computation we may assume that the Turing machine comes with a
table $\Del \sse A^4$ such that the following formula holds:
\[\forall t>0 \; \forall 1 \le i < 2^N-1:
(a_{i-1,t-1},a_{i,t-1},a_{i+1,t-1},a_{i,t}) \in \Del.
\]
Without restriction we have $(B,B,B,B) \in \Delta$, because otherwise 
$M$ would accept only finitely many words. 

 We may express that we can reach a final  configuration $C_t$  by saying: 
 \[ \exists t\geq 1\; \exists 1 \leq i < 2^N: a_{i,t} = q_f.\] As
 in many EXPSPACE-hardness proofs, for comparing successive
 configurations we need to switch to a slightly different encoding, by
 adding the tape position after each symbol from $A$. To do so, we
 enlarge the alphabet $A$ by new symbols $0,1,\$,\#,k_1, \ldots k_N$
 which are not used in any $C_t$ so far. Hence, $\Sig = A \cup
 \os{0,1,\$,\#,k_1, \ldots k_N}$. We encode a position $0 \leq i <
 2^N$ by using its binary representation with exactly $N$ bits. Thus,
 each $i$ is written as a word $\bin(i) = b_1\cdots b_N$ where each
 $b_p\in \{0,1\}$.  In particular, $\bin(0) = 0^N$, $\bin(1) =
 0^{N-1}1$, \ldots, $\bin(2^N-1) = 1^N$.

Henceforth, a configuration $C_t= a_{0,t} \cdots a_{m,t}$ with $m= 2^N-1$ is encoded as a word
\[c_t= a_{0,t}\bin(0) \cdots a_{m,t}\bin(m)\$.\]
Words of this form are called \emph{stamps} in the following.
Each stamp has length $2^N\cdot N +1$. If a factor 
$\bin(i)$ occurs, then either $i=m$ (i.e., $\bin(i) = 1^N$) and the next letter is $\$$ or 
$i<m$ and the next letter is some letter from the original alphabet
$A$ followed by the word $\bin(i+1)$.

Now we are ready to define a language $L= L(w)$ which has the property that 
$L$ is not live \IFF $w \in L(M)$. 
We describe the words $\alp \in \Sigma^\oo$ which belong to $L$ as follows.  
\begin{enumerate}
\item Assume that $\alp$ does not  start with a prefix of the form $c_0\cdots
  c_\ell\#$, where $c_0$ corresponds to the initial configuration
  w.r.t.~$w$, each $c_t$ is a stamp and in the stamp $c_\ell$ the
  symbol $q_f$ occurs. Then $\alp$ belongs to $L$.

\item Assume now that $\alp$ starts with a prefix $c_0\cdots
  c_\ell\#$ as above.
Then we let $\alp$ belong to $L$ if and only if the set of letters
occurring infinitely often in $\alpha$ witness that the prefix $c_0\cdots c_\ell$
of stamps is {\bf not} a valid computation.  Thus, we must point to some $t\geq 1$ and 
some position $1\leq i< m$ such that 
$(a_{i-1,t-1},a_{i,t-1},a_{i+1,t-1},a_{i,t}) \notin \Del.$
The position $i$ is given as $\bin(i) = b_1\cdots b_N \in \os{0,1}^N$.
The string $\bin(i)$ 
defines a subset of $\Sigma$:
\[
I(i)= \set{k_p \in \os{k_1,\ldots, k_N}}{b_p= 1}.
\]
The condition for $\alp$ to be in  $L$ is that 
for some $t$ the mistake from $c_{t-1}$ to $c_t$ is reported by
$(a_{i-1,t-1},a_{i,t-1},a_{i+1,t-1},a_{i,t}) \notin \Del$ and
the position  $i$ such that $I(i)$ equals the set of
letters $k_p$ which appear infinitely often in $\alp$. Note that since
we excluded  mistakes at positition $i=0$ (because of the leftmost $B$),
the set $I(i)$ is non-empty.
\end{enumerate}

\begin{lemma}\label{lem:live}
The language $L= L(w)$ is not live \IFF $w \in L(M)$. 
\end{lemma}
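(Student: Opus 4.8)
The plan is to unfold the definition of liveness through the topological dictionary of \prref{sec:tp}: a language is live exactly when it is dense, i.e.\ when $\ov L = \Sig^\oo$. In the Cantor topology this means that every basic open set $x\Sig^\oo$ with $x\in\Sig^*$ meets $L$. Hence $L$ fails to be live precisely when there is a finite word $x$ with $x\Sig^\oo \cap L = \es$, and I will prove the two implications in this reformulated shape.

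For the direction $w\in L(M)\Rightarrow L$ not live, I would take the (unique, since $M$ is deterministic) accepting run $C_0,C_1,\ldots,C_\ell$ of $M$ on $w$, in which $q_f$ occurs in $C_\ell$, encode it as the stamps $c_0,\ldots,c_\ell$, and set $x = c_0\cdots c_\ell\#$. This $x$ is a good prefix in the sense of clause~(1) of the construction, so every extension $\alp = x\bet$ falls under clause~(2). Because the run is a genuine computation, no position $(t,i)$ carries a violation $(a_{i-1,t-1},a_{i,t-1},a_{i+1,t-1},a_{i,t})\notin\Del$, so the membership condition of clause~(2) can never be met, whatever the set of infinitely recurring letters $k_p$ happens to be. Thus $x\Sig^\oo\cap L=\es$ and $L$ is not live.

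For the converse I would argue contrapositively: assuming $w\notin L(M)$, I show that $L$ is dense, i.e.\ that every $x\in\Sig^*$ extends to a word of $L$. I split on whether some initial segment of $x$ is already a good prefix $c_0\cdots c_\ell\#$. If not, then since forming a good prefix requires the rigid stamp\,/\,$\#$ structure with $q_f$ in the last stamp, I can append a single symbol that destroys this structure (followed by anything), producing an extension that never begins with a good prefix; by clause~(1) it lies in $L$. If yes, then $c_0$ is the correct initial configuration and $q_f$ occurs in $c_\ell$; were $c_0\cdots c_\ell$ a valid computation, it would coincide with the real run of $M$, which never reaches $q_f$ because $w\notin L(M)$, a contradiction. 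Hence the stamp sequence contains a mistake at some $(t,i)$ with $1\le i<m$, and I complete $x$ to $x\bet$ where $\bet$ is chosen so that the letters $k_p$ occurring infinitely often are exactly those in $I(i)$ (possible since $I(i)\neq\es$, as $i\ge1$); by clause~(2) this word lies in $L$. Either way $x$ has an extension in $L$, so $L$ is live.

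The main obstacle, and the heart of the reduction, is the middle step of the converse: the passage ``good prefix with $q_f$ in $c_\ell$, but $w\notin L(M)$, forces a genuine transition violation''. This is exactly where the hypothesis $w\notin L(M)$ is spent, and it relies on the fact that a stamp sequence which both starts from the correct $C_0$ and respects $\Del$ at every position must reproduce the actual computation of $M$, together with the normalization that acceptance is signalled by the appearance of $q_f$. Everything else is bookkeeping: verifying that clauses~(1) and~(2) together cover all of $\Sig^\oo$, that the syntactic good-prefix pattern can be broken by a single wrong letter, and that a tail $\bet$ realizing any prescribed nonempty recurrent set $I(i)$ exists.
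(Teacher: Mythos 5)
Your proposal is correct and follows essentially the same route as the paper's proof: exhibit the encoding of the genuine accepting computation as a good prefix with no mistake for one direction, and for the other direction split on whether the given finite word contains a good prefix, using a tail whose infinitely recurring letters realize the nonempty set $I(i)$ of some mistake position. If anything, you are slightly more careful than the paper in the case where $u$ contains no good prefix but might still be extendable to one, since you explicitly append a structure-breaking symbol (e.g.\ some $k_p$, which never occurs inside a good prefix) rather than asserting $u\Sig^\oo \sse L$ outright.
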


\begin{proof}
%We have to show two directions. 

First, let $w\in L(M)$. Then we claim that
$L$ is not live. To see this let  
$u= c_0\cdots c_\ell\#$, where the prefix 
$c_0\cdots c_\ell$ is a valid accepting computation of $M$. There is no mistake 
in $c_0\cdots c_\ell$. Thus we have $u \Sig^\oo \cap L= \es$, so indeed, $L$ is not live. 

Second, let $w\notin L(M)$. We claim that $L$ is live. 
Consider any $u \in \Sigma^*$. Assume first that  $u$ does not start with a prefix of
the form  $c_0\cdots
  c_\ell\#$, where $c_0$ corresponds to the initial configuration
  w.r.t.~$w$, each $c_t$ is a stamp and in the stamp $c_\ell$ the
  symbol $q_f$ occurs. Then we %can find some word $v \in\Sigma^*$ such that 
  we have $u\Sigma^\oo \subseteq L$. 

Otherwise, assume that $c_0\cdots
  c_\ell\#$  is a prefix of $u$ and that all $c_t$'s are stamps, with $c_0$
  initial and $c_\ell$ containing  $q_f$. 
% Consider any $\alp \in \Sig^\oo$. If $\alp$ does not have a prefix 
% $c_0\cdot c_\ell\#$ then we can choose a prefix $\gam$ of length $2^n\cdot n +1$ and we have $\gam \Sig^\oo \sse L$. Thus, we may assume that 
% $c_0\cdot c_\ell\#$  is a prefix and all $c_t$'s are stamps.
  There must be some mistake in $c_0\cdots
  c_\ell\#$, say for some $i$ and $t$. Let $I(i)$ be as defined
  a above.  As $i \geq 1$ we have $I(i) \neq \es$. Therefore we
  let $\bet$ be any infinite word where the set of letters appearing
  infinitely often is exactly the set $I(i)$. By definition
  of $L$ we have $u \bet \in
    L$. Hence, $L$ is live.  \qed
\end{proof}

There are other ways to encode EXPSPACE computations which may serve to prove 
\prref{prop:livees}, see for example \cite{kv15CSL}. However, these proofs do not reveal any hardness for $\LTL$ monitorability. In particular, they do not reveal 
EXPSPACE or EXPTIME hardness. 
For our encoding this can made very precise. 

\begin{remark}\label{rem:nomon}
Since are interested in  EXPSPACE-hardness, we may assume  that there infinitely many $w$ with $w \notin L(M)$.
Let $n$ be large enough, say $n\geq 3$ and $w \notin L(M)$, then 
$(B,(q_0,a_1),a_2,q_f) \notin \Del$, where $w = a_1a_2\cdots $ because otherwise $w \in L(M)$.
Define $c_1$ just as the initial stamp $c_0$ with the only difference that the letter $(q_0,a_1)$ is replaced by the symbol
$q_f$. Let $u = c_0 c_1 \#$, then 
  for every $v \in \Sigma^*$ we have that $uv(k_N)^\oo \in L$ 
(i.e.,
  there is a mistake at position 1), but 
   $uv(k_1 k_2 \cdots k_{N})^\oo \cap L = \es$
  (i.e., there is no mistake at position $2^N-1$) because  $(B,B,B,B) \in \Del$.   Thus, $L$ is not monitorable.
  \end{remark}

\subsection{Proof of Proposition \ref{prop:livees}}\label{sec:propproof}
$\LTL$ liveness is in EXPSPACE by \prref{rem:lowup}. 
The main ideas for the proof are in the previous subsection. We  show that we can construct in polynomial time on input $w$ some  $\phi \in \LTLS$ such that $L(\phi)= L(w)$. This can be viewed as a standard exercise in $\LTL$.
The solution is a little bit tedious and leads to a formula of at most quadratic size in $n$.
The final step in the proof is to apply \prref{lem:live}. \qed

\section{Conclusion and outlook}\label{sec:out}
In the paper we studied  \moni languages from the perspective of what is a 
``good monitor''. In some sense we showed that there is no final answer yet, but monitorability is a field where various interesting questions remain to be answered. 

Given an $\LTL$ formula for a monitorable property one can construct 
monitors of at most doubly exponential size; and there is some indication that this is the best we can hope for, see \cite{BauerLS06b}. 
Still, we were not able to prove any hardness for $\LTL$ monitorability 
beyond $\PSPACE$. This does not mean anything, but at least in theory, it could be that $\LTL$ monitorability cannot be tested in EXPTIME, but nevertheless it is not EXPTIME-hard. 
%
%Recall that sparse languages it might be undecidable, but still a sparse language is never $\NP$-hard (unless P = NP), see \cite{Mahaney82}. So, the difficulty of a problem does not mean that hardness results follow. 

There is also another possibility. 
Deciding monitorability might be easier than constructing a monitor. Remember that deciding monitorability means to test that the boundary is nowhere dense. However we have argued that a DBA for the boundary does not give necessarily any information about a possible monitor, see the discussion at the beginning of \prref{sec:gcm}. 

A more fundamental question is about the notion of monitorability.
The definition is not robust in the sense that every language becomes monitorable simply by embedding the language into a larger alphabet. 
This is somewhat puzzling, so the question is whether a more robust and still useful notion of monitorability exist. 

Finally, there is an interesting connection to learning. In spite of recent 
progress to learn general $\oo$-regular languages by \cite{AngluinF14} it not known how to learn a DBA for 
deterministic $\oo$-regular languages in polynomial time. The best result is still due to Maler and Pnueli in \cite{MalerP95}. They show that it is possible to learn a DWA for a $\oo$-regular language $L$
in $G_\del \cap F_\sig$ in polynomial time. The queries to the oracle are membership 
question ``$uv^\oo \in L$?'' where $u$ and $v$ are finite words and the query whether a proposed 
DWA is correct. If not, the oracle provides a shortest counterexample of the form  $uv^\oo$. 

Since a DWA serves also as a monitor we can learn a monitor the very same way, but beyond 
$G_\del \cap F_\sig$  it is not known that  membership queries to $L$ and queries whether a proposed 
monitor is correct suffice. As a first step one might try find out how
to learn a deterministic B\"uchi 
monitor in case it exists. This is a natural class beyond $G_\del \cap F_\sig$ because canonical minimal DBA for these languages exist. Moreover, just as for DWA this minimal DBA is an DBM, too. 

Another interesting branch of research is monitorability in a distributed setting. 
A step in this direction for infinite Mazurkiewicz traces was outlined in \cite{DiekertM12_WoLLIC}.

\subsection*{Acknowledgment}
The work was done while the first author was visiting LaBRI at the  Universit{\'e} Bordeaux in June 2015. The hospitality of LaBRI and their members is greatly acknowledged. 

The authors thank Andreas Bauer who communicated to us (in June 2012) that the complexity of
 $\LTL$-liveness should be regarded as open because published proofs 
 stating PSPACE-completeness were not convincing. We also thank Ludwig Staiger, Gal Vardi, and Mikhail Volkov for helpful comments. 
 
%
%%
%%%%%%%%%%%%%%%%%%%%%%%
%\bibliographystyle{abbrv}
%\bibliography{../monitoring,../../traces,../../lit}
%%\bibliography{../../traces}
%\end{document}

\newcommand{\Ju}{Ju}\newcommand{\Ph}{Ph}\newcommand{\Th}{Th}\newcommand{\Ch}{Ch}\newcommand{\Yu}{Yu}\newcommand{\Zh}{Zh}\newcommand{\St}{St}\newcommand{\curlybraces}[1]{\{#1\}}\def\Nst#1{$#1^{st}$}\def\Nnd#1{$#1^{nd}$}\def\Nrd#1{$#1^{rd}$}\def\Nth#1{$#1^{th}$}

\end{document}